\pgfplotsset{compat=1.11}
\newcounter{AnnahmenCounterA}
\newcounter{AnnahmenCounterB}
\newcommand{\AnnA}{A}
\newcommand{\AnnANumm}{(\AnnA\arabic{*})}
\newtheorem*{theorem*}{Argmax Theorem}
\newtheorem{theorem}{Theorem}
\newtheorem{lemma}{Lemma}
\newtheorem{algo}{Algorithm}
\renewenvironment{proof}{{\bfseries Proof.}}
\theoremstyle{definition}
\newcommand{\D}{s}   
\DeclareMathOperator*{\argmax}{arg\,max}
\newcommand{\dt}{\,\mathrm{d}}
\newcommand{\vth}{\vartheta}
\newcommand{\fth}{\boldsymbol{\theta}}
\begin{document}

\title{Testing Truncation Dependence: The Gumbel-Barnett Copula}

\author{Anne-Marie Toparkus \& Rafael Wei{\ss}bach \\[2mm] \textit{\footnotesize{Chair of Statistics and Econometrics,}}   \\[-2mm]
        \textit{\footnotesize{  Faculty for Economic and Social Sciences,}} \\[-2mm]
        \textit{\footnotesize{University of Rostock}} \\
        }
\date{ }
\maketitle

\renewcommand{\baselinestretch}{1.5}\normalsize

\begin{abstract}
In studies on lifetimes, occasionally, the population contains statistical units that are born before the data collection has started. Left-truncated are units that deceased before this start. For all other units, the age at the study start often is recorded and we aim at testing whether this second measurement is independent of the genuine measure of interest, the lifetime. Our basic model of dependence is the one-parameter Gumbel-Barnett copula. For simplicity, the marginal distribution of the lifetime is assumed to be Exponential and for the age-at-study-start, namely the distribution of birth dates, we assume a Uniform. Also for simplicity, and to fit our application, we assume that units that die later than our study period, are also truncated. As a result from point process theory, we can approximate the truncated sample by a Poisson process and thereby derive its likelihood. Identification, consistency and asymptotic distribution of the maximum-likelihood estimator are derived. Testing for positive truncation dependence must include the hypothetical independence which coincides with the boundary of the copula's parameter space. By non-standard theory, the maximum likelihood estimator of the exponential and the copula parameter is distributed as a mixture of a two- and a one-dimensional normal distribution. For the proof, the third parameter, the unobservable sample size, is profiled out. 
An interesting result is, that it differs to view the data as truncated sample, or, as simple sample from the truncated population, but not by much. The application are 55 thousand double-truncated lifetimes of German businesses that closed down over the period 2014 to 2016. The likelihood has its maximum for the copula parameter at the parameter space boundary so that the $p$-value of test is $0.5$. The life expectancy does not increase relative to the year of foundation. Using a Farlie-Gumbel-Morgenstern copula, which models positive and negative dependence, finds that life expectancy of German enterprises even decreases significantly over time. A simulation under the condition of the application suggests that the tests retain the nominal level and have good power.   	
 \\[2mm]
\noindent \textit{Keywords:} double truncation, Exponential distribution, large sample, dependent truncation, Gumbel-Barnett copula
\end{abstract}

\section{Introduction}

Truncation can be part of the sampling design, especially in event history analysis. Independent left and also double truncation (DT) already have induced statistical literature. Nonparametric contributions are    \cite{And0,kalblawl1989,efron1999,shen2010,doerr2019,franchae2019} and \cite{morei2021}. (Semi-)Parametric models are estimated with the likelihood, a conditional likelihood or a profile likelihood in \cite{kalblawl1989,moreira2010a,emura2017,doerr2019,doerre2020,weiswied2021} or \cite{weissbachm2021effect}.
Dependent single and double truncation are lately also studied \citep{chiou2019,emura2020,renxi2022}. Dependent truncation has some similarity with dependent censoring in that the identification of the dependency must be taken into account \cite[see e.g.][]{czado}. Retrospective sampling of lifetimes is an example of potentially dependent DT, depicted in Figure \ref{exa0}. We specify the population as all units of a kind with the birth event in a period of length $G$. 
\setlength{\unitlength}{1cm}
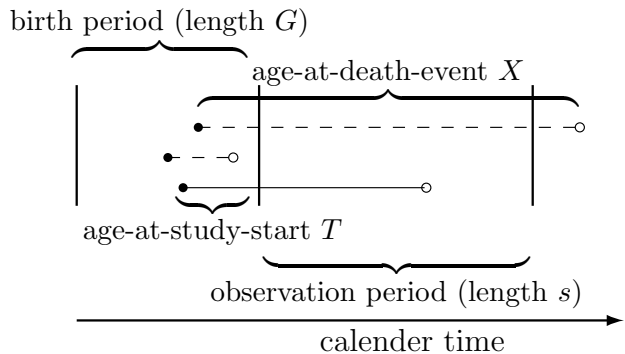
\begin{figure}[h] \centering
	\setlength{\unitlength}{0.8cm}
	\begin{picture}(10,5.5)
		\linethickness{0.3mm}
		\put(0,0.3){\vector(1,0){9.0}} \put(4.0,-0.2){calender time}
		\put(0,2.2){\line(0,1){2}}  
		\put(-1.1,4.6){$\overbrace{\vphantom{ } \hspace{2.3cm}}^{\mbox{\small birth period (length $G$)}}$}
		\put(3,2.2){\line(0,1){2}} 
		\put(7.5,2.2){\line(0,1){2}} 
		
		\linethickness{0.15mm}
		
		\put(2,3.5){\circle*{0.15}} \put(8.275,3.5){\circle{0.15}} 
		\multiput(2,3.5)(0.4,0){16}{\line(1,0){0.2}}
		
		\put(1.5,3){\circle*{0.15}} \put(2.575,3){\circle{0.15}} 
		\multiput(1.5,3)(0.4,0){3}{\line(1,0){0.2}}

		\put(1.75,2.5){\circle*{0.15}} \put(5.75,2.5){\circle{0.15}} 
		\put(1.625,2.5){ \line(1,0){3.925}}

		\put(2.0,3.8){$\overbrace{\vphantom{ } \hspace{5.0cm}}^{\mbox{\small age-at-death-event $X$}}$} 
		\put(0.1,2.4){$\underbrace{\vphantom{ } \hspace{0.96cm}}_{\mbox{\small age-at-study-start $T$}}$} 
		
		\put(2.2,1.4){$\underbrace{\vphantom{ } \hspace{3.5cm}}_{\mbox{\small observation period (length $s$)}}$} 
	\end{picture}
\caption{Three cases of the date of 1$^{st}$ event (black bullet) and date of 2$^{nd}$ event (white circle): observed (solid) and truncated (dashed) lifetimes}
\label{exa0} 
\end{figure}
We observe units affected by a death event in a period of length $s$, which for simplicity we assume to directly follow on the birth period (see Figure \ref{exa0}). Also for analytic simplicity, the lifetime $X$ is assumed to be exponentially distributed $f_E$, with parameter $\theta$.  The data must include the birthdate for each observed unit and we measure it backwards from the start of the observation period and denote this `age when the study starts' as $T$ (see Figure \ref{exa0}). For simplicity, we assume that births stem from a homogeneous Poisson process, which implies a uniform distribution $f^T$ for $T$ \cite[see][Lemma 2]{doerre2020}.
In this design, independence of the lifetime $X$ and the birthdate-equivalent $T$ contradicts the scientific consensus, at least for human mortality, of an increase in life expectancy \cite[see e.g.][]{oepvau2002}. We model the dependency with the one-dimensional parameter $\vartheta$ in the Gumbel-Barnett copula, so that, conditionally on the birthdate, the life expectancy trends upwards for the non-negative $\vartheta$, and is free of a trend for $\vartheta=0$. 

Our economic application aims at testing whether the established negative mortality trend in human demography also holds in Germany for business demography. As population we consider enterprises founded in the first quarter century after the German reunification. As data we use 55{,}279 enterprise lifetimes, double-truncated as a result of their reported closures in the period 2014 to 2016. 

In Section \ref{secpop} we define the bivariate distribution model of the population, formalize the sampling design and especially derive the selection probability and finally the likelihood. 
Section \ref{secident} studies identification as a prerequisite for the asymptotic properties and especially with the focus of the unobserved sample size as an additional parameter. Section \ref{subsecscoretest} studies the asymptotic distribution of the dependence parameter $\vartheta$ in the copula, with emphasis on the parameter space boundary. The section also includes the business demography application. Section \ref{mcs} studies the test in a Monte Carlo simulation.

\section{Population model, sampling, and likelihood} \label{secpop}

\subsection{Population and latent sample} \label{secpop2}
We consider as the population, units born within a pre-defined time window going back $G$ time units from the start of the study. The unit $i$ of the latent sample carries as a second measure to its lifetime $X_i$ ($\in \mathbb{R}_0^+$) its birthday coded as `age when the study starts' $T_i\in [0,G]$. Define $S:=\mathbb{R}_0^+ \times [0,G]$, with $0<G < \infty$, the space for one latent outcome, and let all open subsets of $S$ generate the $\sigma$-field $\mathcal{B}$.  Each unit is truncated at a different age. Let us collect notations and assumptions.
\begin{enumerate}[label=\AnnANumm]
	\item\label{A1:Compact} Parameter space: Let $\Theta:=[\varepsilon, 1 / \varepsilon] \times [0, 1- \varepsilon_{\vartheta}]$ for some `small' $\varepsilon$ and $\varepsilon_{\vartheta}>0$.
\end{enumerate}
Testing the independence hypothesis $H_0$ will coincide with the $0$ in the second dimension of $\Theta$. The statistic that indicates a deviation from $H_0$ will be the point estimate for the second parameter. Hence, for deriving its distribution under $H_0$, $\Theta$ must include the boundary $\Theta^H:=(\varepsilon, 1 / \varepsilon) \times \{0\}$. 
\begin{enumerate}[label=\AnnANumm]
	\addtocounter{enumi}{1}
	\item\label{A2:SquareInt} Marginal distributions: Let for $\theta \in [\varepsilon, 1 / \varepsilon]$, $X_i \sim Exp(\theta)$, i.e. with density $f_E(\cdot/\theta)$ and cumulative distribution function (CDF) $F_E(\cdot/\theta)$ of the exponential distribution. Let $T_i \sim Unif[0,G]$, with density $f^{T}$ and CDF $F^{T}$ of the uniform distribution.
\end{enumerate}

For two independent exponentially distributed random variables $X$ and $Y$, the bivariate survival function is $e^{-\theta_x x - \theta_y y}$, so that a simple idea of E.J. Gumbel is to model dependence by a bivariate survival function $e^{-\theta_x x - \theta_y y - \vartheta x y}$.

Our field of application are the social sciences where it is to be expected that societies in general gain progress, e.g. in their life expectation, be it of persons or of enterprises. Our main focus is a test for the hypothesis of stagnation against such progress. A suitable adaption of Gumbel's general idea, applied here to the uniform marginal distribution of $T$, is the Gumbel-Barnett copula \cite[see][Formula 2.3.5]{B3}. (We use the non-survival version of it.)  

\setcounter{AnnahmenCounterA}{5} 
\begin{enumerate}[label=\AnnANumm]
	\addtocounter{enumi}{2}
	\item\label{A3:Ind} Gumbel-Barnett copula: $(X_i,T_i)'$, $i=1,\ldots,n, n \in \mathbb{N}$, is an SRS, i.e. $\mbox{i.i.d.}$ random variables (r.v.) mapping from the probability space $(\Omega,\mathcal{A},P_{\boldsymbol{\theta}})$, with $\boldsymbol{\theta}:=(\theta,\vartheta) \in \Theta$ onto the measurable space $(S, \mathcal{B})$. $X_i$ and $T_i$ are dependent with copula
	\begin{align*}
		C^{\vth}(u,v)=u+v-1+(1-u)(1-v)e^{-\vth \log(1-u) \log(1-v)}.
	\end{align*}
\end{enumerate}
Note that $\vartheta=0$ represents independence. Scatter plots of simulated $(X_i,T_i)'$ for different $\vartheta$  in Appendix \ref{copvis} visualize the degree of negative dependence that is modelled. For large $\vth$, a small $T$, i.e. a late foundation, is associated with longer survival. The CDF of an $(X_i,T_i)'$ is 
\begin{align*}
	F_{\fth}(x,t)=\begin{cases}
		\frac{t}{G}-e^{-\theta x}+e^{-\theta x}\left(1-\frac{t}{G}\right)^{\vth \theta x+1}, & x> 0,\, 0 < t < G,\\
		1-e^{-\theta x}, & x>0, \, t\geq G,
	\end{cases}
\end{align*}
and zero elsewhere. The joint density with respect to $P_{\boldsymbol{\theta}}$
for $x >0$ and $0 < t < G$ is
\begin{equation} \label{e4}
	f_{\boldsymbol{\theta}}(x,t)=	-\frac{\theta}{G}e^{-\theta x}\left(1-\frac{t}{G}\right)^{\vth\theta x}[(\vth\theta x+1)(\vth \log(1-t/G)-1)+\vth].
\end{equation}

Kendall's tau , given as $\tau=4 \int_0^1 \int_0^1 C^{\vartheta}(u,v) d C^{\vartheta}(u,v) -1$, is a univariate measure for dependence and has no closed form but is easily seen to range from $-0.361$ and $0$. The latter is associated with $\vartheta=0$, the independence.  

To compare with, later in the data analysis, and for application in other fields, a two-sided dependence might also be of interest. \cite{gum1960} proposed a respective copula.
\begin{enumerate}
	\item[(FGM)] Farlie-Gumbel-Morgenstern copula: Let $(X_i,T_i)'$ be as in \ref{A3:Ind} and parameter space $\Theta_{FGM} := [\varepsilon, 1/\varepsilon]\times [\varepsilon_{\vartheta}-1, 1- \varepsilon_{\vartheta}]$ with copula
	\begin{align*}
		C_{FGM}^{\vth}(u,v):=u v [1+\vth^{FGM} (1-u)(1-v)].
	\end{align*}	
\end{enumerate}
Kendall's tau is given by $2 \vartheta^{FGM}/9$ and ranges between $-2/9$ and $2/9$, with independence at $\vartheta^{FGM}=0$ \cite[see][Example 5.2]{B3}. Scatter plots of simulated $(X_i,T_i)'$ for $\vth^{FGM}=\varepsilon_{\vartheta}-1 $, $\vth=0$ and $\vth^{FGM}=1- \varepsilon_{\vartheta}$ in Appendix \ref{simfgm} illustrate that, at the `extremes', dependence is weaker than for the Gumbel-Barnett copula, corresponding also to a smaller range of Kendall's tau.

\subsection{Data} \label{secpop3}

The data are a subset of the SRS in Assumption \ref{A3:Ind} of $n$ draws governed by $f_{\boldsymbol{\theta}_0}$, i.e. for the `true' parameter $\boldsymbol{\theta}_0 \in \Theta$.   
A parallelogram $D$ formalizes that a sample unit is only observed when its death falls into the observation period (of length $s$). 
\begin{enumerate}[label=\AnnANumm]
	\addtocounter{enumi}{3}
	\item\label{A4:trunc} Observation: For known constant $\D>0$, the column vector $(X_i,T_i)'$ is observed if it is in $D:=\{ (x,t)' | 0 < t \leq x \leq t + \D, t \le G \}$. 
\end{enumerate} 
Following up on \ref{A4:trunc}, we denote an {\it observation} by $(\widetilde{X}_j,\widetilde{T}_j)'$ and renumber the observed units with $j=1, \ldots, M \le n$. (Sorting the unobserved units to the end of the latent SRS is a convention already to be found in \cite{heckman1976}.) Note that $M=\sum_{i=1}^n \mathds{1}_{\{(X_i,T_i)' \in D\}}$ and is hence random. 
Now define for $\boldsymbol{\theta} \in \Theta$ and $P_{\boldsymbol{\theta}}$ from Assumption \ref{A3:Ind} the selection probability of the $i^{\mathrm{th}}$ individual
\begin{eqnarray} \label{alpha0}
	\alpha_{\boldsymbol{\theta}}  & := & P_{\boldsymbol{\theta}}\{T_i \le X_i \le T_i +s\} = \int_0^G\int_t^{t+s} f_{\fth}(x,t) \dt x \dt t = \int_D f_{\fth}(x,t)  \dt(x,t) \nonumber \\
	& = & \int_0^1 \left(c^{\vartheta}_u\{F^T[F_E^{-1}(u)]\}-c^{\vartheta}_u\{F^T[F_E^{-1}(u)]-F^T(s)\} \right)\dt u,	
\end{eqnarray}
with $c^{\vartheta}_u(v):=\partial C^{\vartheta}(u,v)/\partial u$. The selection probability is not given in closed form, but note that the numerical calculation is easy, because $D$ is bounded. Furthermore, the last expression of Equation \eqref{alpha0} is a univariate integral over a compact interval similar to \citet[][Theorem 1]{emura2020}. 
Note that, with the slight re-definition $\boldsymbol{\theta}:=(\theta, \vartheta_{FGM})'$, the selection probability based on $C_{FGM}^{\vartheta}$ is given in closed-form as:
\begin{align*}
	\alpha_{\fth}^{FGM}	&:=\frac{1}{\theta G} (1-e^{-\theta s})(1-e^{-\theta G})  \\
	& -\frac{\vth^{FGM}}{G}\left[-\frac{1}{\theta}(e^{-\theta s}-1)(e^{-\theta G}+1) 
	  +\frac{1}{2 \theta} (e^{-2\theta s}-1)(e^{-2\theta G}+1)\right] \\
	& +\frac{2 \vth^{FGM}}{G^2} \left[\frac{1}{\theta^2} (1-e^{-\theta s})(1-e^{-\theta G})-\frac{1}{4 \theta^2} (1-e^{-2\theta s})(1-e^{-2\theta G})\right]
\end{align*}
For the sake of brevity, the display of the theoretical analysis is limited here to the Gumbel-Barnett copula (i.e. Assumption (A3)), because the FGM-copula is considerably easier. However in the empirical example of Section \ref{empexa} we consider both, and compare. 

The selection probability will occur in the likelihood, so that for maximization, its first partial derivatives will be needed. The second and third partial derivatives of $\alpha_{\boldsymbol{\theta}}$ will be needed for proving the asymptotic normality and calculating the standard error. The proof of the following and explicit representations of $\alpha$'s derivatives, both needed later, are similar to those of \citet[][Corollary 1]{weiswied2021} but are omitted here.  
\begin{lemma} \label{propalpha}
	Under the Assumptions \ref{A1:Compact}-\ref{A4:trunc}  and $\boldsymbol{\theta} \in \Theta$, it is $\alpha_{\boldsymbol{\theta}} \in (0,1)$. Furthermore $\boldsymbol{\theta} \mapsto \alpha_{\boldsymbol{\theta}}$ has first, second and third partial derivatives in the directions of $\theta$ and $\vartheta$ and combinations thereof. Those derivatives are continuous in $\boldsymbol{\theta}$.
\end{lemma}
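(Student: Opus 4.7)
The plan is to split the claim into two pieces: first the range $\alpha_{\boldsymbol{\theta}} \in (0,1)$ via a direct sign analysis of the density \eqref{e4}, and then the $C^3$-smoothness via differentiation under the integral sign, since the region $D$ does not depend on $\boldsymbol{\theta}$ and is compact.

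For the range, the pre-factor $-(\theta/G) e^{-\theta x}(1-t/G)^{\vartheta\theta x}$ in \eqref{e4} is strictly negative on $\text{int}(D)$ (using $\theta \ge \varepsilon > 0$), so positivity of $f_{\boldsymbol{\theta}}$ reduces to showing the bracket $B(x,t,\boldsymbol{\theta}) := (\vartheta\theta x+1)(\vartheta\log(1-t/G)-1)+\vartheta$ is strictly negative there. For $\vartheta = 0$ one reads off $B = -1$. For $\vartheta > 0$, exploiting $\vartheta\log(1-t/G)\le 0$ and $\vartheta\theta x + 1 \ge 1$, I would estimate $B \le -(\vartheta\theta x + 1) + \vartheta \le -1 + (1-\varepsilon_{\vartheta}) = -\varepsilon_{\vartheta} < 0$. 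Hence $f_{\boldsymbol{\theta}} > 0$ on $\text{int}(D)$, and since $D$ has positive Lebesgue measure, $\alpha_{\boldsymbol{\theta}} > 0$. For the upper bound, the same sign analysis applies on $\{(x,t) : 0 < t < G,\, x > t+s\} \subset S \setminus D$, giving $P_{\boldsymbol{\theta}}(S \setminus D) > 0$ and so $\alpha_{\boldsymbol{\theta}} < 1$.

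For smoothness, I would apply Leibniz's rule to $\alpha_{\boldsymbol{\theta}} = \int_D f_{\boldsymbol{\theta}}(x,t)\,\dt(x,t)$. Pointwise in $(x,t) \in \text{int}(D)$, $\boldsymbol{\theta}\mapsto f_{\boldsymbol{\theta}}(x,t)$ is $C^\infty$, and each mixed partial derivative of order up to three has the schematic form $e^{-\theta x}(1-t/G)^{\vartheta\theta x}\cdot P(\theta,\vartheta,x,\log(1-t/G))$, where $P$ is a polynomial whose coefficients depend continuously on $\boldsymbol{\theta}$. On $D$ we have $x \le G+s$, and $\Theta$ is compact, so the only possible source of unboundedness is $\log(1-t/G)$ as $t \uparrow G$. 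Because $\int_0^G |\log(1-t/G)|^k\,\dt t = G\cdot k!$ is finite for every $k$, and the factor $(1-t/G)^{\vartheta\theta x}$ further dampens the log whenever $\vartheta\theta x > 0$, a crude common majorant of the form $C\bigl(1+|\log(1-t/G)|^3\bigr)$, uniform in $\boldsymbol{\theta} \in \Theta$, dominates all partial derivatives up to order three. Differentiation under the integral sign is therefore justified, and continuity of the derivatives in $\boldsymbol{\theta}$ follows from dominated convergence. The one subtlety — and thus the main obstacle — is arranging the domination uniformly across the boundary $\vartheta = 0$ of $\Theta$, where the dampening $(1-t/G)^{\vartheta\theta x}$ collapses to $1$ and one must rely on the bare integrability of powers of $\log(1-t/G)$; once this is settled, the rest of the argument is mechanical and matches that of \citet[Corollary~1]{weiswied2021}.
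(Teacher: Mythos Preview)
Your approach is correct and is exactly what the paper has in mind; the paper omits its own proof and simply defers to \citet[Corollary~1]{weiswied2021}, which proceeds via the same sign analysis of the bracket and differentiation under the integral with a $\log$-power majorant. Two minor remarks: third-order derivatives in $\vartheta$ can produce terms of degree four in $\log(1-t/G)$ (each $\partial_\vartheta$ of $(1-t/G)^{\vartheta\theta x}$ contributes one power, on top of the degree-one bracket), so your majorant should read $C(1+|\log(1-t/G)|^4)$ rather than the third power; and the ``subtlety'' you flag at $\vartheta=0$ is not actually an obstacle, because your majorant already dominates without any help from the factor $(1-t/G)^{\vartheta\theta x}\le 1$ --- the bare integrability $\int_0^G|\log(1-t/G)|^k\,\dt t = G\,k!$ is all that is needed, uniformly in $\boldsymbol{\theta}\in\Theta$.
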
 
We are now in a position to formulate the likelihood, maximize it and apply large sample theory.

\subsection{Likelihood} \label{seclike}

The likelihood springs from standard results for point processes \cite[see e.g.][Theorem 3.1.1., Section 7.1 respectively]{reiss1993,daljo}, and we maximize it later as a function of the generic $\boldsymbol{\theta}$ and $n$. (Distinguishing in notation between the true and a generic $n$ is omitted.) The idea is roughly to decompose the likelihood according to
\[
\ell^{\star}=Pr\{data\}=Pr\{(\widetilde{X}_1,\widetilde{T}_1)', \ldots, (\widetilde{X}_M,\widetilde{T}_M)'|M\}Pr\{M\}.
\] 
Note that by $Pr$, we cannot mean $P_{\boldsymbol{\theta}}$ of Assumption \ref{A3:Ind}. Detailed definitions of the measures related to the probabilities are the same as for the model with independent truncation, i.e. $\vartheta=0$,  \cite[see][]{weiswied2021}. The latter reference also proves that the $(\widetilde{X}_j,\widetilde{T}_j)'$ are stochastically independent, conditional on observation, so that $Pr\{(\widetilde{X}_1,\widetilde{T}_1)', \ldots, (\widetilde{X}_M,\widetilde{T}_M)'|M\}$ becomes a product over the conditional densities of each observation.  
With $P_{\boldsymbol{\theta}}$ from Assumption \ref{A3:Ind}, $(\widetilde{X}_j,\widetilde{T}_j)'$ has CDF
\begin{equation} \label{defxt}
	F^{\widetilde{X},\widetilde{T}}(x,t):= P_{\boldsymbol{\theta}}\left\{X_i \le x,T_i \le t|T_i \leq X_i \leq T_i + s\right\}.
\end{equation}
Leaving out the proof, an explicit relation between the distribution of a $(X_i,T_i)'$ and \eqref{defxt} is, for $(x,t)' \in D$, 
\begin{equation} \label{lemma_imeasure}
	\alpha_{\boldsymbol{\theta}} F^{\widetilde{X},\widetilde{T}}(x,t)  =\int_0^t \int_0^x f_{\fth}(y,s) dy ds - r(x,t), \quad \text{with} \quad  \frac{\partial^2}{\partial x \partial t} r(x,t) =0,
\end{equation}
under the Assumptions \ref{A1:Compact}-\ref{A4:trunc} and $\boldsymbol{\theta} \in \Theta$. The proof for the property of the remainder $r$ is similar to that in \cite{weiswied2021}. Hence the density of $(\widetilde{X}_j,\widetilde{T}_j)'$ is $f_{\fth}(x,t)/\alpha_{\boldsymbol{\theta}}$.
The  Binomial-distributed size of the observed sample, $M$, can be approximated by a Poisson-distributed $M^{\star}$, when the selection  probability $\alpha_{\boldsymbol{\theta}}$  for each of the $n$ i.i.d. Bernoulli experiments is small. This is especially the case when the width of the observation period (of length $s$) is `short', relative to the population period (of length $G$). The resulting density $Pr\{M=m^{\star}\} \approx \frac{\mu^{m^{\star}}}{m^{\star}!}e^{-\mu}$, with $\mu=n\alpha_{\boldsymbol{\theta}}$, is responsible not only for the very last (exponential) term in the following representation, but also contributes a $n^{M^{\star}}$ to the leading product.  
With $h_{\fth}(x,t):=n f_{\fth}(x,t) \mathds{1}_D(x,t)$ and using \eqref{lemma_imeasure}, the proof of \citet[][Theorem 3]{weiswied2021} extends to:
\begin{equation*}
	\ell^{\star} \approx \left(\prod_{j=1}^{M^{\star}} h_{\fth}(\widetilde{X}_j,\widetilde{T}_j)\right) e^{(G+s)G-n\alpha_{\fth}}
	=n^{M^{\star}} \left(\prod_{j=1}^{M^{\star}} f_{\fth}(\widetilde{X}_j,\widetilde{T}_j)\right) e^{(G+s)G-n\alpha_{\fth}}
\end{equation*}
Here the proximity is in the sense of a Hellinger distance. Note that $\alpha_{\boldsymbol{\theta}}$ as denominator of the density of $(\widetilde{X}_j,\widetilde{T}_j)'$ cancels the other $\alpha_{\boldsymbol{\theta}}$ in the  density of $M^{\star}$ out. Because almost surely $\widetilde{T}_j<G$ and $\ell^{\star}>0$, we have
\begin{align*}
	\log \ell^{\star} \approx \sum_{j=1}^{M^{\star}} \log n f_{\fth}(\widetilde{X}_j,\widetilde{T}_j) + (G+s)G-n\alpha_{\fth}.
\end{align*}
In this approximation $M^{\star}$ can exceed $n$, and e.g. $(\widetilde{X}_{n+1},\widetilde{T}_{n+1})'$ will not be defined. In order to guarantee that the observations fit the model (in the meaning of distributional models for point processes), we further approximate  $M^{\star} \approx M$ \cite[see again][Sect. 3]{weiswied2021}. It follows $\log \ell^{\star} \approx\log \ell$ with
\begin{align}\label{e5}
	\begin{split}
		\log \ell(\fth,n) &:=\sum_{j=1}^M \log n f_{\fth}(\widetilde{X}_j, \widetilde{T}_j) + (G+s)G-n\alpha_{\fth}\\
		&=\sum_{i=1}^{n} \mathds{1}_{\{(X_i,T_i) \in D\}}\log n f_{\fth}(X_i,T_i) + (G+s)G-n\alpha_{\fth}, 
	\end{split}
\end{align}
where again we assume strictly $\widetilde{T}_j<G$ and $T_i<G$. Profiling out $n$, we estimate the parameter $\fth$ with the first two coordinates of 
\begin{align}\label{e6}
	\argmax\limits_{\fth \in \Theta, n \in \mathbb{N}} \,\log \ell (\fth,n).
\end{align}	
Note that the maximum can be on the boundary of $\Theta$ in $\vartheta$-direction, especially in $\Theta^H$. For the score function, necessary partial derivatives with respect to $\theta$ and $\vth$ are given in closed form in Appendix \ref{score_theta_vartheta}. The latter derivatives depend on $n$.
By Appendix \ref{score_n}, $\log \ell(\fth,n)$ is maximized in $n$ by the next smallest integer to $n=M/\alpha_{\fth}$. The latter uses the fact that the logarithm can be bounded from above by a linear function and by a hyperbola from below (with proof Appendix \ref{appendixlem5} and also used in the following section).
\begin{lemma}\label{h1} For any $x>1$ holds $1-\frac{1}{x} < \log(x) < x-1$.
\end{lemma}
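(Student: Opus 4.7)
The statement is an elementary calculus fact, so the plan is just to pick a clean route and execute it.

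My preferred approach is the integral representation. For $x > 1$, write $\log(x) = \int_1^x \frac{1}{t}\,\mathrm{d}t$. On the open interval $(1, x)$ the integrand satisfies $\frac{1}{x} < \frac{1}{t} < 1$ with strict inequality on a set of positive measure. Integrating this chain from $1$ to $x$ yields $\frac{x-1}{x} < \log(x) < x - 1$, and the left-hand side is exactly $1 - \frac{1}{x}$. This delivers both inequalities in one line and uses no tools beyond monotonicity of the Lebesgue/Riemann integral.

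As an equally short alternative, one can introduce the auxiliary functions $f(x) := (x-1) - \log(x)$ and $g(x) := \log(x) - \left(1 - \tfrac{1}{x}\right)$. Both satisfy $f(1) = g(1) = 0$, while $f'(x) = 1 - \tfrac{1}{x} > 0$ and $g'(x) = \tfrac{1}{x} - \tfrac{1}{x^2} = \tfrac{x-1}{x^2} > 0$ for all $x > 1$. By the mean value theorem (or direct integration of the derivative), $f(x) > 0$ and $g(x) > 0$ on $(1, \infty)$, which is the claim.

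There is no real obstacle; the only thing worth checking is that the inequalities are strict (not merely $\le$), which is immediate because each of the comparisons $1/x < 1/t$ and $1/t < 1$ holds on an interval of positive length inside $(1, x)$, or equivalently because $f'$ and $g'$ are strictly positive throughout $(1, \infty)$.
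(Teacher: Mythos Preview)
Your proposal is correct. Your ``equally short alternative'' with the auxiliary functions $f$ and $g$ is exactly the paper's proof (with the roles of $f$ and $g$ swapped in naming); your preferred integral-representation argument is a clean variant that reaches the same strict inequalities in one stroke by sandwiching $1/t$ between the constants $1/x$ and $1$ on $(1,x)$, so nothing is lost or gained beyond a matter of taste.
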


\section{Identification} \label{secident}

Identification is necessary to ensure the consistency of a parameter estimator. (Consistency will then be necessary for asymptotic normality.) The classic definition of identification is tailored to an SRS. An SRS is only latent in the study at hand. It will still be useful  to study SRS-identification jointly of the latent univariate model (see Assumption \ref{A2:SquareInt}) and of the dependence model (see Assumption \ref{A3:Ind}). Appendix \ref{identsrs} proves identification of $\boldsymbol{\theta}$. 

Now in the truncated sample, with an interest in inference for $\fth$, we profiled out the parameter $n$. This reduces the three-score estimating equations for \eqref{e6} (see Appendix \ref{appscore}), by solving for $n$ and inserting in the remaining two estimating equations. Instead of inserting the natural-valued solution for $n$, we use the real-valued `near-zero' $M/\alpha_{\fth}$ (see again Appendix \ref{score_n}). This `near-zero' will be covered by the applied theory from \citet[][Sect. 5.2]{vaart1998}. Specifically, insertion yields: 
\begin{subequations} \label{eq:1}
	\begin{equation}\label{e7}
		\begin{split}
			\psi_{\fth,1}(X_i,T_i)
			&:=\mathds{1}_{[T_i,T_i+s]}(X_i)\left(\frac{1}{\theta}+X_i(\vth \log(1-T_i/G)-1) \right.\\
			& \left.\quad+\frac{\vth X_i(\vth \log(1-T_i/G)-1)}{(\vth \theta X_i+1)(\vth \log(1-T_i/G)-1)+\vth}-\frac{\frac{\partial \alpha_{\fth}}{\partial \theta}}{\alpha_{\fth}}\right)
		\end{split} 
	\end{equation} \\
	\begin{equation}\label{e8}
		\begin{split}
			\psi_{\fth,2}(X_i,T_i)
			&:=\mathds{1}_{[T_i,T_i+s]}(X_i)\Bigg(\theta X_i \log(1-T_i/G) \\
			&  \quad+\frac{(2\vth \theta X_i+1)\log(1-T_i/G)-\theta X_i+1}{(\vth \theta X_i+1)(\vth \log(1-T_i/G)-1)+\vth} -\frac{\frac{\partial \alpha_{\fth}}{\partial \vartheta}}{\alpha_{\fth}}\Bigg)
		\end{split} 
	\end{equation}
\end{subequations}
With the definition $\psi_{\fth}:=(\psi_{\fth,1},\psi_{\fth,2})'$, the near-zero estimator $\hat{\fth}_n$ for the true parameter $\fth_0$ (see Figure \ref{exa}) is the zero of $\Psi_{n}(\fth):=\frac{1}{n}\sum_{i=1}^n 	\psi_{\fth}(X_i,T_i)$, if in $\Theta$, and the nearest boundary value else. Note that $\Psi_{n}(\fth)$, is observable after multiplication by $n$ and has the same zero \cite[see][Sect. 2.2]{weiswied2021}. A boundary value on $\Theta^H$ is likely under $H_0$ and will shown to have a probability of 50\%. A parameter-independent  bound for $\psi_{\boldsymbol{\theta}}$  will be needed for proving consistency and asymptotic normality.
\begin{lemma}\label{e18}
	Under Assumptions \ref{A1:Compact}-\ref{A4:trunc}, for finite constant $K_{\varepsilon}>0$, depending on $\varepsilon$ and $\varepsilon_{\vth}$, a parameter-independent bound for the norm of $\psi_{\fth}$ is $g(x,t):=\mathds{1}_{[t,t+s]}(x) \{K_{\varepsilon}+K_{\varepsilon}[1-\log(1-t/G)]\}$, namely 	
	$||\psi_{\fth}(x,t)||^2 \leq g^2(x,t)$ for all $\fth \in \Theta$.
\end{lemma}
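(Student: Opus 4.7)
The plan is to bound each coordinate of $\psi_{\fth}$, uniformly in $\fth\in\Theta$ and on the support of the common indicator, by an affine function of $L:=-\log(1-t/G)\ge 0$; squaring and summing the two coordinates then fits into the form $g^2$. Since the indicator forces $0\le x\le t+s\le G+s$, the variable $x$ and hence $\theta x$ are a priori bounded by constants depending only on $\varepsilon,G,s$. The only unbounded quantity that must be tracked is therefore $L$, which diverges as $t\uparrow G$. By Lemma \ref{propalpha}, $\alpha_{\fth}$ and its first partial derivatives are continuous in $\fth$ on the compact $\Theta$ with $\alpha_{\fth}\in(0,1)$, so the $\alpha$-ratios $\partial_\theta\alpha_{\fth}/\alpha_{\fth}$ and $\partial_\vth\alpha_{\fth}/\alpha_{\fth}$ are uniformly bounded by a single constant free of $\fth,x,t$.

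The critical step is to bound from below, uniformly in $\fth$, the common denominator
\[
D(\fth,x,t):=(\vth\theta x+1)(\vth\log(1-t/G)-1)+\vth
\]
appearing in the rational terms of both $\psi_{\fth,1}$ and $\psi_{\fth,2}$. I would rewrite it as
\[
D=-1+\vth\bigl(1-\theta x-L-\vth\theta x L\bigr).
\]
Because $\theta x,\,L,\,\vth\theta x L\ge 0$, the bracket is at most $1$. If it is nonnegative, then $\vth\cdot(\text{bracket})\in[0,1-\varepsilon_{\vth}]$ and hence $D\in[-1,-\varepsilon_{\vth}]$; if it is negative, then $D<-1$. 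In either case $|D|\ge\varepsilon_{\vth}$. This is precisely what the buffer $\varepsilon_{\vth}>0$ in Assumption \ref{A1:Compact} buys us, and it is the main obstacle of the proof: without it, $D$ can vanish as $\vth\uparrow 1$ and the estimating equations become singular.

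With $|D|\ge\varepsilon_{\vth}$ in hand, the remaining bounds are routine term-by-term book-keeping. For $\psi_{\fth,1}$ one estimates $1/\theta\le 1/\varepsilon$, the polynomial term $|X_i(\vth\log(1-T_i/G)-1)|\le(G+s)(1+(1-\varepsilon_{\vth})L)$, and the rational term by dividing its numerator $|\vth X_i(\vth\log(1-T_i/G)-1)|$ by $|D|\ge\varepsilon_{\vth}$, which preserves the affine-in-$L$ form. Analogously for $\psi_{\fth,2}$: $|\theta X_i\log(1-T_i/G)|\le(G+s)L/\varepsilon$, while the rational term's numerator is a polynomial in bounded variables times a factor at most $C(1+L)$, again divided by $|D|\ge\varepsilon_{\vth}$. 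Adding the uniform $\alpha$-ratio bounds gives, for each coordinate, an estimate of the form $\mathds{1}_{[t,t+s]}(x)\cdot C(1+L)$ with $C=C(\varepsilon,\varepsilon_{\vth},G,s)$. Summing the squares of the two coordinates and choosing $K_{\varepsilon}\ge C\sqrt{2}$ absorbs the constant and matches the target form $K_{\varepsilon}+K_{\varepsilon}\,(1-\log(1-t/G))$, completing the proof.
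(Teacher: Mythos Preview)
Your proof is correct and follows the same approach as the paper's: bound the $\alpha$-ratios via continuity on the compact $\Theta$ (Lemma~\ref{propalpha}), control $1/\theta$ by $1/\varepsilon$, and use that on the indicator's support $x\le G+s$ so the only unbounded contribution is $L=-\log(1-t/G)$. Your explicit lower bound $|D|\ge\varepsilon_{\vth}$ on the common denominator is a detail the paper's sketch leaves implicit; it is the one place where the restriction $\vth\le 1-\varepsilon_{\vth}$ from Assumption~\ref{A1:Compact} actually enters, and your case split makes this transparent.
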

\begin{proof}
	Essentially, bounding $\sup_{\boldsymbol{\theta} \in \Theta} \psi_{\boldsymbol{\theta},j}(X_i,T_i)$ ($j=1,2$) is enabled by using the fact that a continuous function on a compact set attains its maximum for $\alpha_{\boldsymbol{\theta}}$ and its derivative. For $\psi_{\boldsymbol{\theta},j}(X_i,T_i)$ themselves, the numerator is bounded, and the remaining $\log$-term is not bounded but shown to be integrable. Specifically, $1/\theta$ can be bounded by $1/\varepsilon$.  \qed
\end{proof}

Of course, concavity of the function to which $\Psi_{n}$ is the gradient, at $\fth_0$, will be important. Due to analytic intractability we note it as an assumption. Throughout, a dot on top of a function $\mathbb{R}^2 \to \mathbb{R}$ will signal a gradient. On top of a gradient $\mathbb{R}^2 \to \mathbb{R}^2$, it signals its Jacobi matrix, i.e. the Hessian matrix of the function.  
\begin{enumerate}[resume*]
	\item \label{A5negdef}
	Let for $\boldsymbol{\theta}_0 \in \Theta$
	\begin{align*}
		\mathbb{E}_{\fth_0}\big[\dot{\psi}_{\fth_0}(X_1,T_1)\big] =
		\left(\begin{array}{cc}
			\mathbb{E}_{\fth_0}\big[\frac{\partial}{\partial \theta} \psi_{\fth_0,1}(X_1,T_1)\big] 
			& \mathbb{E}_{\fth_0}\big[\frac{\partial}{\partial \theta} \psi_{\fth_0,2}(X_1,T_1)\big] \\
			\mathbb{E}_{\fth_0}\big[\frac{\partial}{\partial \theta} \psi_{\fth_0,2}(X_1,T_1)\big]  
			& \mathbb{E}_{\fth_0}\big[\frac{\partial}{\partial \vth} \psi_{\fth_0,2}(X_1,T_1)\big] \\
		\end{array}\right)
	\end{align*} 
	be negative definite, where here and throughout $\partial \psi_{\fth_0}/\partial \fth$ stands for $\partial \psi_{\fth}/\partial \fth |_{\fth=\fth_0}$.
\end{enumerate}
Instead of a proof, Appendix \ref{xy} plots the surface of $\mathbb{E}_{\fth_0}[\dot{\psi}_{\fth_0}(X_1,T_1)]$'s determinant, i.e. only of the even principal minor, by $\boldsymbol{\theta}_0$ on a large subset of the parameter space $\Theta$. Figure \ref{detA5graph} (left) shows that the determinant is clearly positive for a large part of the parameter space, but also reveals that in the area $\boldsymbol{\theta}_0 \in [0.01,0.02] \times [0.1,1]$ the determinant is near to zero. Figure \ref{detA5graph} (right) explores the area and shows that for $\vartheta_0>0.14$ the determinant increases again and that the minimum is attained in the range of $\theta_0 \in [0.012,0.014]$. Those latter values are inconceivable for our example of business demography, as then the life expectancy was $\approx 0.01^{-1}=100$ years. Still, even when estimation in that area will be more difficult numerically and standard errors will be larger, Assumption \ref{A5negdef} obviously holds. Demonstrating the negativity of $\mathbb{E}_{\fth_0}[\frac{\partial}{\partial \theta_0} \psi_{\fth_0,1}(X_1,T_1)] $, i.e. of the uneven principal minor is omitted here.    

We now argue that \citet[][Theorem 5.9, Condition (ii)]{vaart1998} is the relevant analogue of identification for a truncated sample.

\subsection{The profile model} \label{profmod}

Instead of truncating the sample by $D$, one can think of the data as drawing an SRS of a correspondingly truncated population, of sample size $m$. The thus defined sub-population $\widetilde{Pop}$ is depicted in Figure \ref{exa} (bottom left box).

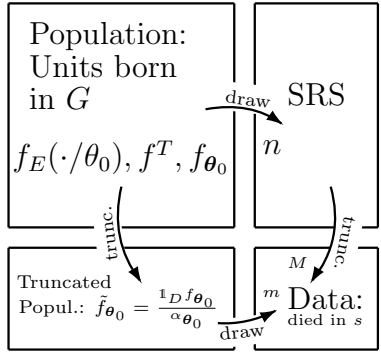
\begin{figure}[t] \centering
	\setlength{\unitlength}{0.054cm}
\begin{picture}(100,100)
	\linethickness{0.3mm}
	\put(10,35){\line(0,1){55}}
	\put(65,35){\line(0,1){55}}
	\put(10,90){\line(1,0){55}}
	\put(10,35){\line(1,0){55}}
	\put(15,80){Population:}
	\put(15,72){Units born}
	\put(15,64){in $G$}
	\put(11,50){$f_E(\cdot/\theta_0), f^T, f_{\boldsymbol{\theta}_0}$}
	
	
	\put(10,5){\line(0,1){25}}
	\put(65,5){\line(0,1){25}}
	
	\put(10,30){\line(1,0){55}}
	\put(12,20){\ssmall Truncated}	
	\put(12.5,14){\ssmall Popul.: $\tilde{f}_{\fth_0}=\frac{\mathds{1}_{D} f_{\boldsymbol{\theta}_0}}{\alpha_{\boldsymbol{\theta}_0}}$}		
	
	\put(72,53){$n$} 
	\put(70,35){\line(0,1){55}}
	\put(100,35){\line(0,1){55}}
	\put(70,35){\line(1,0){30}}
	\put(70,90){\line(1,0){30}}	
	\put(78,65){SRS}		
	
	\put(78,25){\tiny $M$}  
	\put(72,18){\tiny $m$} 
	\put(70,5){\line(0,1){25}}
	\put(100,5){\line(0,1){25}}
	\put(70,5){\line(1,0){30}}
	\put(70,30){\line(1,0){30}}
	\put(78,15){Data:}	
	\put(78,11){\tiny died in $s$}	
	
	\put(60.5,66){\rotatebox{-7}{\colorbox{white}{{{\fontsize{7}{8}\selectfont draw}}}}}
	\put(76.5,60.5){\vector(1,-0.7){0.5}}
	\qbezier(58,64)(69,66)(76.5,60.5)
	\put(30,25){\rotatebox{97}{\colorbox{white}{{{\fontsize{7}{8}\selectfont trunc.}}}}}
	\put(42,22){\vector(0.6,-1){0.5}}
	\qbezier(37.5,45)(34.5,32)(42,22) 
	\put(87,40){\rotatebox{-101}{\colorbox{white}{{{\fontsize{7}{8}\selectfont trunc.}}}}}
	\put(84,22){\vector(-0.6,-1){0.5}}
	\qbezier(88.5,45)(92,32)(84,22) 
	\put(59,7){\rotatebox{7}{\colorbox{white}{{{\fontsize{7}{8}\selectfont draw}}}}}
	\put(75,14.5){\vector(1,0.5){0.5}}
	\qbezier(61,11.5)(68,11)(75,14.5) 		
	\put(10,5){\line(1,0){55}}			
\end{picture}
	\caption{Population (with distribution of $X$, $T$ and $(X,T)'$) (top left), SRS (with sample size) (top right),  Population truncated by units died outside of the observation period (with distribution of $(\tilde X, \tilde T)'$) (bottom left), Data (as truncated SRS or SRS of truncated population) (bottom right)}
	\label{exa}
\end{figure}

This `anti-clockwise' design is not ours (defined by Assumptions \ref{A1:Compact}-\ref{A4:trunc}), but we will see that its identification enables a helpful result, also for the `clockwise' design. 
The (Lebesgue) density of the two measurements is $\tilde{f}_{\boldsymbol{\theta}}(x,t)  =  \mathds{1}_{D}(x,t) f_{\boldsymbol{\theta}}(x,t)/\alpha_{\boldsymbol{\theta}}$ and is also the density of $(\widetilde{X}_1, \widetilde{T}_1)'$.
Then, a short calculation reveals that, surprisingly, $\psi_{\boldsymbol{\theta}}$ is its score function, i.e.
$\psi_{\boldsymbol{\theta}}(x,t) = \nabla_{\boldsymbol{\theta}} \log  \tilde{f}_{\boldsymbol{\theta}}(x,t)$. (The gradient is signalled by $\nabla$, instead of a dot, when the expression is too long, as $\log  \tilde{f}$ is here.) This justifies the name profile model, and profile score for $\psi_{\boldsymbol{\theta}}$. In order to stress the implication, note that the estimators are equal for both designs. \citet[][Section 7.1, Example 7.1(a) (continued)]{daljo} reminds that this is only the case when the data can be approximated by a Poisson process, which we are allowed to, following \cite{weiswied2021}, because the selection probability $\alpha_{\boldsymbol{\theta}_0}$ will be small enough in our application.   

For a random draw $(\tilde{X}_1,\tilde{T}_1)'$ from $\widetilde{Pop}$, by Jensen's inequality, a result for the Kullback-Leibler (KL) divergence $\boldsymbol{\theta}, \, \boldsymbol{\theta}_0 \in \Theta$ is
\begin{eqnarray*}
	KL(\tilde{f}_{\boldsymbol{\theta}},\tilde{f}_{\boldsymbol{\theta}_0}) & := & \tilde{\mathbb{E}}_{\boldsymbol{\theta}_0}\left(\log \frac{\tilde{f}_{\boldsymbol{\theta}_0}(\tilde{X}_1, \tilde{T}_1)}{\tilde{f}_{\boldsymbol{\theta}}(\tilde{X}_1, \tilde{T}_1)}\right)=\int_{D} \log \frac{\tilde{f}_{\boldsymbol{\theta}_0}(x,t)}{\tilde{f}_{\boldsymbol{\theta}}(x,t)} \tilde{f}_{\boldsymbol{\theta}_0}(x,t) d(x,t) \ge  0 \\
	\text{and} & = & 0 \Leftrightarrow \boldsymbol{\theta}=\boldsymbol{\theta}_0,
\end{eqnarray*} 
if the model given by $\tilde{f}_{\boldsymbol{\theta}}$ is identified, which is shown in Appendix \ref{idcond}, and needs the mentioned latent identification (in Appendix \ref{identsrs}). 
Hence, $KL(\tilde{f}_{\boldsymbol{\theta}},\tilde{f}_{\boldsymbol{\theta}_0})$ is, as a function of $\boldsymbol{\theta}$, uniquely minimized at $\boldsymbol{\theta}=\boldsymbol{\theta}_0$. 

This is not enough for proving consistency, but enables to state now some minor consequences to be used in Section \ref{sec3_2}. The minimizing argument (and uniqueness) are unchanged when subtracting the constant $\tilde{\mathbb{E}}_{\boldsymbol{\theta}_0}[\log \tilde{f}_{\boldsymbol{\theta}_0}(\tilde{X}_1, \tilde{T}_1)]$, so that $ \tilde{\mathbb{E}}_{\boldsymbol{\theta}_0}[\log \tilde{f}_{\boldsymbol{\theta}}(\tilde{X}_1, \tilde{T}_1)]$ is uniquely maximized at $\boldsymbol{\theta}=\boldsymbol{\theta}_0$. 
As consequence of Assumption \ref{A5negdef} it is easy to verify that the (Fisher information) matrix in the profile model $\tilde{\mathbb{E}}_{\fth_0}[\dot{\psi}_{\fth_0}(\tilde{X}_1,\tilde{T}_1)]$ is negative definite. Hence, due to its given smoothness and as consequence of Assumption \ref{A5negdef}, there is a unique solution (zero) of
\begin{equation} \label{laber}
	\tilde{\mathbb{E}}_{\boldsymbol{\theta}_0}\left( \psi_{\boldsymbol{\theta}}(\tilde{X}_1,\tilde{T}_1)\right)=\nabla_{\boldsymbol{\theta}} \left[ \tilde{\mathbb{E}}_{\boldsymbol{\theta}_0}\left(\log \tilde{f}_{\boldsymbol{\theta}}(\tilde{X}_1,\tilde{T}_1)\right)\right]=0.
\end{equation}
(For that, interchange integration and differentiation similar to the proof of Lemma \ref{propalpha} \citep[and][Theorem 5.7, Chapter IV, § 5]{B8}.) 

\subsection{M-identification} \label{sec3_2}

The study of the profile model, as part of the `anti-clockwise' design, was helpful as we can now follow-up on the unique solution for \eqref{laber}. It is easy to see that, with $\mathbb{E}_{\boldsymbol{\theta}_0}$ relating to $P_{\boldsymbol{\theta}_0}$ from Assumption \ref{A3:Ind}, it is for $\boldsymbol{\theta},\, \boldsymbol{\theta}_0 \in \Theta$
\begin{equation} \label{Eumrechnung}
	\tilde{\mathbb{E}}_{\boldsymbol{\theta}_0}\left(\psi_{\boldsymbol{\theta}}(\tilde{X}_1, \tilde{T}_1)\right)= \frac{1}{\alpha_{\boldsymbol{\theta}_0}}\mathbb{E}_{\boldsymbol{\theta}_0}\left[ \psi_{\boldsymbol{\theta}}(X_1, T_1)\right].
\end{equation}
Because $\alpha_{\boldsymbol{\theta}_0}>0$, by Lemma \ref{propalpha},  
$\Psi(\boldsymbol{\theta}):=\mathbb{E}_{\boldsymbol{\theta}_0}[ \psi_{\boldsymbol{\theta}}(X_1, T_1)]=0$
also has a unique solution. 
Additionally, Appendix \ref{prooflem1} proves a result needed now (and again when we prove asymptotic normality).
\begin{lemma}\label{lem1} Under Assumptions \ref{A1:Compact}-\ref{A4:trunc}  and $\boldsymbol{\theta}_0 \in \Theta$, it is
	$\Psi(\boldsymbol{\theta}_0)=0$.
\end{lemma}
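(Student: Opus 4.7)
The plan is to chain together two identities already set up in the preceding paragraphs of Section \ref{sec3_2}: the change-of-measure formula \eqref{Eumrechnung}, and the fact that $\boldsymbol{\theta}_0$ is a zero of the profile score expectation $\tilde{\mathbb{E}}_{\boldsymbol{\theta}_0}[\psi_{\boldsymbol{\theta}}(\tilde X_1,\tilde T_1)]$, as expressed in \eqref{laber}. Once these two ingredients are in place, the lemma drops out after multiplying by $\alpha_{\boldsymbol{\theta}_0}$, which is strictly positive by Lemma \ref{propalpha}.

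Concretely, I would first record the identity \eqref{Eumrechnung} with $\boldsymbol{\theta} = \boldsymbol{\theta}_0$: since $\tilde f_{\boldsymbol{\theta}_0}(x,t) = \mathds{1}_D(x,t)\, f_{\boldsymbol{\theta}_0}(x,t)/\alpha_{\boldsymbol{\theta}_0}$ and every component of $\psi_{\boldsymbol{\theta}_0}(x,t)$ carries the factor $\mathds{1}_{[t,t+s]}(x)$ (which forces $\psi_{\boldsymbol{\theta}_0}$ to vanish off $D$), a single line of integration gives
\[
\mathbb{E}_{\boldsymbol{\theta}_0}[\psi_{\boldsymbol{\theta}_0}(X_1,T_1)] \;=\; \alpha_{\boldsymbol{\theta}_0}\,\tilde{\mathbb{E}}_{\boldsymbol{\theta}_0}[\psi_{\boldsymbol{\theta}_0}(\tilde X_1,\tilde T_1)].
\]
Next, I would justify that the right-hand expectation vanishes. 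This is the standard score identity applied to the profile density: differentiate the normalization $\int_D \tilde f_{\boldsymbol{\theta}}(x,t)\,\mathrm{d}(x,t)=1$ with respect to $\boldsymbol{\theta}$ and interchange the derivative and the integral, which is legitimate by the same dominated-convergence reasoning used in Lemma \ref{propalpha} (the partial derivatives of $f_{\boldsymbol{\theta}}$ and of $\alpha_{\boldsymbol{\theta}}$ are continuous on the compact $\Theta$, and the domain $D$ is bounded). Evaluating at $\boldsymbol{\theta}=\boldsymbol{\theta}_0$ yields $\tilde{\mathbb{E}}_{\boldsymbol{\theta}_0}[\psi_{\boldsymbol{\theta}_0}(\tilde X_1,\tilde T_1)]=0$, which is nothing but \eqref{laber} at $\boldsymbol{\theta}=\boldsymbol{\theta}_0$.

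Combining the two displays and using $\alpha_{\boldsymbol{\theta}_0}>0$ gives $\Psi(\boldsymbol{\theta}_0)=0$. The only nontrivial point is the interchange of differentiation and integration; I do not expect a real obstacle there because $D$ is bounded and the entries of $\psi_{\boldsymbol{\theta}}$ are integrable with respect to $f_{\boldsymbol{\theta}_0}$ by Lemma \ref{e18}. It is worth noting in passing that this argument does not require $\boldsymbol{\theta}_0$ to be interior to $\Theta$: the identity arises from differentiating the normalization integral, not from a first-order optimality condition, so it remains valid on the hypothesis boundary $\Theta^H$ used later for testing.
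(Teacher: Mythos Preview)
Your argument is correct and takes a genuinely different route from the paper's own proof. The paper (Appendix~\ref{prooflem1}) proceeds by direct computation: it splits $\mathbb{E}_{\boldsymbol{\theta}_0}[\psi_{\boldsymbol{\theta}_0,1}(X_1,T_1)]$ into three pieces $s_1$, $s_2$, $s_3$, writes each as an integral over $D$ using the explicit density \eqref{e4}, expresses $s_3 = \partial \alpha_{\boldsymbol{\theta}_0}/\partial \theta$ by interchanging derivative and integral in \eqref{alpha0}, and then checks by explicit algebra on the integrands that $s_1 + s_2 - s_3 = 0$; the second coordinate is treated analogously. You instead exploit the structural observation from Section~\ref{profmod} that $\psi_{\boldsymbol{\theta}} = \nabla_{\boldsymbol{\theta}} \log \tilde{f}_{\boldsymbol{\theta}}$, so the profile expectation vanishes by the standard score identity, and transport this back via \eqref{Eumrechnung}. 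Your approach is shorter, coordinate-free, and would survive a change of copula or marginals without rewriting; the paper's explicit computation is self-contained in that it does not rely on the score identification asserted (but not proved) in Section~\ref{profmod}, and in effect serves as an independent check of the formulas \eqref{e7}--\eqref{e8}. Both arguments ultimately hinge on the same interchange of differentiation and integration, so neither needs Assumption~\ref{A5negdef}, in line with the lemma's stated hypotheses.
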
 
This ends the proof of \citet[][Theorem 5.9, Condition (ii)]{vaart1998} that for any $\varepsilon > 0$ and $\boldsymbol{\theta}_0 \in \Theta$
\begin{equation*}
	\inf_{\boldsymbol{\theta}\in \Theta: d(\boldsymbol{\theta},\boldsymbol{\theta}_0) \ge \varepsilon} \Vert \Psi(\boldsymbol{\theta}) \Vert > 0= \Vert \Psi(\boldsymbol{\theta}_0) \Vert,
\end{equation*}
according to \citet[][Problem 5.27]{vaart1998}. We interpret this as an M-identification condition. The remaining Condition (i) for consistency is  convergence of $\Psi_n$ to $\Psi$ uniformly in $\boldsymbol{\theta} \in \Theta$, and will be proven in Section \ref{seccons}.

\section{Wald-type test for independence} \label{subsecscoretest}

In order to test the hypothesis of independent truncation, i.e. $H_0: \vartheta_0=0$,
for a parameter at the edge of the parameter space, a score test would be typical \cite[see e.g.][]{vossweis2014}. As an advantage, calculating the two-dimensional unrestricted estimate $\hat{\boldsymbol{\theta}}$ would not be necessary. Only the restricted one-dimensional estimator, i.e. for $\vartheta=0$, is necessary and reduces the numerical effort. And this has already been derived in \citet{weiswied2021}. However, the score asymptotically only depends on $\hat{\vartheta}$ so that we simply use the Wald-type idea to reject for a $\hat{\vartheta}$ being too large.

An important  further element will be the Fisher information, and we will need that for $\fth \in \Theta$: 
 \begin{equation} \label{deffi}
 \mathcal{I}(\boldsymbol{\theta}):=\mathbb{E}_{\fth}[\psi_{\fth}(X_i,T_i)\psi_{\fth}(X_i,T_i)'] 
\end{equation}

\subsection{Theory} \label{seccons}

 We especially need to approximate the distribution of the point estimator, the vector of zeros of \eqref{eq:1}, by a Gaussian distribution. Our data in Section \ref{empexa} will be sufficiently large to do so. We verify the classic conditions for asymptotic normality of M-estimation  \cite[see][Theorem 5.41]{vaart1998} for $\Psi_n(\fth)$, given shortly after  \eqref{eq:1}. One first condition is weak consistency. The method of proof in \citet[][Theorem 5.9]{vaart1998} even allows us to make a statement on $\Theta$, including the boundary in $\vartheta$-direction. 

\begin{theorem} \label{cons} Under assumptions \ref{A1:Compact}-\ref{A5negdef}, $\fth_0 \in \Theta$ and $\hat{\fth}_n$ defined after \eqref{eq:1}, holds $\hat{\boldsymbol{\theta}}_n  \stackrel{p}{\to} \boldsymbol{\theta}_0$ as $n \to \infty$.
\end{theorem}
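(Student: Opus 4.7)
The plan is to apply \citet[Theorem~5.9]{vaart1998} to the (near-)zero estimator $\hat{\boldsymbol{\theta}}_n$ of $\Psi_n$. The well-separated identification condition~(ii) of that theorem, namely $\inf_{\boldsymbol{\theta}:d(\boldsymbol{\theta},\boldsymbol{\theta}_0)\ge \varepsilon}\|\Psi(\boldsymbol{\theta})\| > 0 = \|\Psi(\boldsymbol{\theta}_0)\|$, is already in place via Lemma~\ref{lem1} and the discussion in Section~\ref{sec3_2}. What is missing is condition~(i), the uniform convergence
\[
\sup_{\boldsymbol{\theta}\in\Theta}\bigl\|\Psi_n(\boldsymbol{\theta})-\Psi(\boldsymbol{\theta})\bigr\|\stackrel{p}{\to} 0.
\]

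I would derive this from a standard uniform law of large numbers for the class $\mathcal{F}:=\{\psi_{\boldsymbol{\theta}}:\boldsymbol{\theta}\in\Theta\}$ by checking three ingredients. First, $\Theta$ is compact by Assumption~\ref{A1:Compact}. Second, for each fixed $(x,t)\in D$ the map $\boldsymbol{\theta}\mapsto\psi_{\boldsymbol{\theta}}(x,t)$ is continuous: this is visible from the closed forms \eqref{e7}--\eqref{e8} together with Lemma~\ref{propalpha} (continuity of $\alpha_{\boldsymbol{\theta}}$ and its first partials), and relies on the denominator $(\vartheta\theta x+1)(\vartheta\log(1-t/G)-1)+\vartheta$ being bounded away from zero uniformly on $\Theta$, which holds because $\log(1-t/G)\le 0$ on $D$ and $\vartheta\in[0,1-\varepsilon_\vartheta]$ jointly force this denominator to be bounded above by $-\varepsilon_\vartheta$. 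Third, the envelope $g$ of Lemma~\ref{e18} is $P_{\boldsymbol{\theta}_0}$-integrable, because $T_1\sim\text{Unif}[0,G]$ gives $\mathbb{E}_{\boldsymbol{\theta}_0}[-\log(1-T_1/G)]=\int_0^1-\log(1-u)\,du=1<\infty$, and the remaining summand is a bounded constant. These three ingredients together make $\mathcal{F}$ a Glivenko--Cantelli class, so the supremum above converges almost surely, hence in probability, to zero.

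The main subtlety, rather than an outright obstacle, is the boundary: $\hat{\boldsymbol{\theta}}_n$ need not be an exact zero of $\Psi_n$ but may equal the projection onto $\Theta^H$ of the unrestricted zero. As already signalled in the paper after Lemma~\ref{h1}, this is precisely the near-zero framework of \citet[Sect.~5.2]{vaart1998}, which only requires $\|\Psi_n(\hat{\boldsymbol{\theta}}_n)\|=o_p(1)$. This holds because the unrestricted zero converges to $\boldsymbol{\theta}_0$ by uniform convergence plus well-separated identification, and the projection adds at most a vanishing $|\hat{\vartheta}_n^{\star}|$ correction. Combining the uniform convergence with the well-separated zero from M-identification, \citet[Theorem~5.9]{vaart1998} delivers $\hat{\boldsymbol{\theta}}_n\stackrel{p}{\to}\boldsymbol{\theta}_0$.
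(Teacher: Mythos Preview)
Your proposal is correct and follows essentially the same route as the paper: both apply \citet[Theorem~5.9]{vaart1998}, take condition~(ii) from Section~\ref{sec3_2}, and verify condition~(i) via a uniform LLN on the compact $\Theta$ using continuity of $\psi_{\boldsymbol{\theta}}$ (with $\alpha_{\boldsymbol{\theta}}$ handled by Lemma~\ref{propalpha}) and integrability of the envelope $g$ from Lemma~\ref{e18} through the uniform marginal of $T$. Your explicit bound on the denominator and your remark on the near-zero boundary case are details the paper either suppresses or leaves implicit, but the argument is the same.
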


\begin{proof}
	As stated above, the second condition is the content of Section \ref{secident}, when \citet[][Prob. 5.27]{vaart1998} is taken into account due to $\Theta$ being compact. In order to show the first one, we use the uniform law of large numbers \cite[see e.g.][p. 2129]{B7}. Its smoothness requirements are all fulfilled by noting that involved functions (including $\alpha_{\boldsymbol{\theta}}$ by Lemma \ref{propalpha}) are smooth, and compositions do not result in discontinuities due to division by zero. For example, for the involved term $1/\theta$, originating from the density of the exponential distribution, $\theta \ge \varepsilon>0$ by Assumption \ref{A1:Compact} avoids poles. Calculations for the denominator not to be zero are not presented here, for the sake of brevity. 
	The main requirement is hence to show that the parameter-independent bound $g$ for the profile score $\psi_{\boldsymbol{\theta}}$ of Lemma \ref{e18} is integrable. This dominating condition is due to A. Wald \cite[see e.g.][Sect. 24.2.3, Condition (D3)]{gou2}. Using the marginal distribution $F^T$ of Assumption \ref{A2:SquareInt} and $\log(1-t/G)\le 0$ for $0 \le t \le G$, one has $\mathbb{E}_{\fth_0}[g(X_1,T_1)]
	\leq \int_0^G \int_0^{\infty} |K_{\varepsilon}+K_{\varepsilon}[1-\log(1-t/G)]|  f_{\fth_0}(x,t) \dt x \dt t
	=K_{\varepsilon}+ \frac{K_{\varepsilon}}{G}\int_0^G [1-\log(1-t/G)] \dt t
	=K_{\varepsilon}(1+\frac{1}{G}[2t-G(1-t/G)\log(1-t/G)]_0^G)=3K_{\varepsilon}<\infty$.	\qed    
\end{proof}

Proving normality for the zeros of \eqref{eq:1} for $\fth_0$ in the inner open of $\Theta$ follows \citet[][Theorem 5.41]{vaart1998}. Consistency, given by Theorem \ref{cons}, as well as Lemma \ref{lem1}, are requirements. The remaining arguments are given in the beginning of Appendix \ref{appnorm}.

\begin{theorem}\label{satznormoffen}
	Under Assumptions \textup{\ref{A1:Compact}}-\textup{\ref{A5negdef}}, $\hat{\fth}$ defined after \eqref{eq:1} and $\fth_0 \in (\varepsilon,1/\varepsilon) \times (0,1-\varepsilon_{\vartheta})$ , the sequence  $\sqrt{n}(\hat{\fth}_n-\fth_0)$ converges for $n \to \infty$ in distribution to a normally distributed random variable with expectation (vector) $\mathbf{0}$ and covariance matrix 
	\[
	(\mathbb{E}_{\fth_0}[\dot{\psi}_{\fth_0}(X_i,T_i)])^{-1}\mathcal{I}(\boldsymbol{\theta}_0)(\mathbb{E}_{\fth_0}[\dot{\psi}_{\fth_0}(X_i,T_i)])^{-1}.
	\] 
\end{theorem}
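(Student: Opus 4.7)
The plan is to invoke \citet[][Theorem 5.41]{vaart1998} directly, as the authors announce. Three ingredients are already in place: consistency $\hat{\boldsymbol{\theta}}_n \stackrel{p}{\to} \boldsymbol{\theta}_0$ from Theorem \ref{cons}, the population zero $\Psi(\boldsymbol{\theta}_0)=0$ from Lemma \ref{lem1}, and the nonsingularity of the matrix $V_{\boldsymbol{\theta}_0}:=\mathbb{E}_{\boldsymbol{\theta}_0}[\dot\psi_{\boldsymbol{\theta}_0}(X_1,T_1)]$ from Assumption \ref{A5negdef}. Because $\boldsymbol{\theta}_0$ lies in the interior $(\varepsilon,1/\varepsilon)\times(0,1-\varepsilon_\vartheta)$, there is an open ball around $\boldsymbol{\theta}_0$ that stays inside $\Theta$, which is exactly what van der Vaart's theorem needs so that partial derivatives of $\psi_{\boldsymbol{\theta}}$ are well defined near the truth and the estimator eventually satisfies $\Psi_n(\hat{\boldsymbol{\theta}}_n)=0$ on the event $\{\hat{\boldsymbol{\theta}}_n\in\mathrm{int}(\Theta)\}$, whose probability tends to one by consistency.

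Next, I would verify the two remaining conditions of Theorem 5.41. First, the local Lipschitz condition: I would exhibit a measurable envelope $\dot g(x,t)$ with $\mathbb{E}_{\boldsymbol{\theta}_0}[\dot g^2]<\infty$ such that for all $\boldsymbol{\theta}_1,\boldsymbol{\theta}_2$ in a fixed neighborhood $U$ of $\boldsymbol{\theta}_0$,
\begin{equation*}
\|\psi_{\boldsymbol{\theta}_1}(x,t)-\psi_{\boldsymbol{\theta}_2}(x,t)\|\le \dot g(x,t)\,\|\boldsymbol{\theta}_1-\boldsymbol{\theta}_2\|.
\end{equation*}
By the mean value theorem this reduces to a uniform bound on the Jacobi matrix $\dot\psi_{\boldsymbol{\theta}}$ for $\boldsymbol{\theta}\in U$; the structure of the bound is the same kind of calculation as for Lemma \ref{e18}, combining the compactness of $U$, the continuity of $\alpha_{\boldsymbol{\theta}}$ and its first two derivatives (Lemma \ref{propalpha}), and the indicator $\mathds{1}_{[t,t+s]}(x)$ that confines $x$ to a $t$-dependent compact interval. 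The envelope will be polynomial in $\log(1-t/G)$, whose square is integrable against the uniform density of $T_i$ by the same antiderivative computation used at the end of the proof of Theorem \ref{cons}. Second, I would check that $\boldsymbol{\theta}\mapsto\mathbb{E}_{\boldsymbol{\theta}_0}[\psi_{\boldsymbol{\theta}}(X_1,T_1)]$ is differentiable at $\boldsymbol{\theta}_0$ with derivative $V_{\boldsymbol{\theta}_0}$; this follows by interchanging differentiation and integration, which is justified by the same dominating function, in the spirit of the remark after \eqref{laber}.

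With these conditions in place, Theorem 5.41 delivers the stochastic expansion
\begin{equation*}
\sqrt{n}(\hat{\boldsymbol{\theta}}_n-\boldsymbol{\theta}_0)=-V_{\boldsymbol{\theta}_0}^{-1}\,\sqrt{n}\,\Psi_n(\boldsymbol{\theta}_0)+o_{P_{\boldsymbol{\theta}_0}}(1).
\end{equation*}
Since $\Psi_n(\boldsymbol{\theta}_0)$ is an average of i.i.d.\ vectors with mean zero (by Lemma \ref{lem1}) and finite covariance $\mathcal{I}(\boldsymbol{\theta}_0)$ as in \eqref{deffi} (finiteness again from the envelope $g$ of Lemma \ref{e18}), the multivariate central limit theorem yields $\sqrt{n}\Psi_n(\boldsymbol{\theta}_0)\stackrel{d}{\to}\mathcal{N}(\mathbf{0},\mathcal{I}(\boldsymbol{\theta}_0))$. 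An application of the continuous mapping/Slutsky theorem through the nonrandom linear transform $-V_{\boldsymbol{\theta}_0}^{-1}$ produces the asserted limit with the symmetric sandwich covariance $V_{\boldsymbol{\theta}_0}^{-1}\mathcal{I}(\boldsymbol{\theta}_0)V_{\boldsymbol{\theta}_0}^{-1}$ (using that $V_{\boldsymbol{\theta}_0}$ is symmetric by equality of mixed partials, so $(V_{\boldsymbol{\theta}_0}^{-1})'=V_{\boldsymbol{\theta}_0}^{-1}$).

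The main obstacle I anticipate is the bookkeeping for the local Lipschitz envelope: the explicit formulas \eqref{e7}–\eqref{e8} contain the rational term whose denominator $(\vartheta\theta X_i+1)(\vartheta\log(1-T_i/G)-1)+\vartheta$ can become small, and differentiating it produces additional powers of $\log(1-T_i/G)$ that must be controlled uniformly over $U$. This is purely a technical integrability exercise but is the step where the argument has real substance, since consistency, the score identity, and the CLT are otherwise essentially automatic once the correct dominating function is identified.
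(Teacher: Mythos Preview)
Your proposal is correct and follows essentially the same route as the paper: both invoke \citet[][Theorem 5.41]{vaart1998} after assembling consistency (Theorem \ref{cons}), the score identity $\Psi(\boldsymbol{\theta}_0)=0$ (Lemma \ref{lem1}), nonsingularity of $V_{\boldsymbol{\theta}_0}$ (Assumption \ref{A5negdef}), and finiteness of $\mathcal{I}(\boldsymbol{\theta}_0)$ via the envelope of Lemma \ref{e18}. The one discrepancy is in which smoothness condition is actually verified: the paper checks twice continuous differentiability of $\boldsymbol{\theta}\mapsto\psi_{\boldsymbol{\theta}}(x,t)$ and an \emph{integrable} (not square-integrable) dominating function for the \emph{second} derivatives $\ddot\psi_{\boldsymbol{\theta}}$, which are the genuine hypotheses of Theorem 5.41; you instead verify a local Lipschitz condition on $\psi_{\boldsymbol{\theta}}$ with square-integrable envelope, which is the hypothesis of Theorem 5.23, not 5.41. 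Either route yields the same linear expansion and limit, so there is no gap---just be aware that if you cite 5.41 you should bound $\ddot\psi_{\boldsymbol{\theta}}$ rather than $\dot\psi_{\boldsymbol{\theta}}$ (the paper does this and notes, as you anticipate, that the bookkeeping with the rational denominator and powers of $\log(1-t/G)$ is lengthy but routine).
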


From the theorem a Wald-type test can be performed with the respective confidence interval around $\hat{\vartheta}_n$ for any hypothetical $\vartheta_0>0$. A numerical simplification for such a confidence interval yields the information matrix equality. The second half of Appendix \ref{appnorm} yields under the Assumptions \ref{A1:Compact}-\ref{A4:trunc} the analogue
\begin{equation}\label{ime}
	\mathbb{E}_{\fth_0}[\dot{\psi}_{\fth_0}(X_i,T_i)]=-\mathcal{I}(\boldsymbol{\theta}_0),	
\end{equation}
so that the asymptotic covariance matrix in Theorem \ref{satznormoffen} reduces to  $\mathcal{I}(\boldsymbol{\theta}_0)^{-1}$. 

A simple test to reject the independence hypothesis $H_0: \vartheta_0=0$, which is our main interest here, is a too large $\hat{\vartheta}_n$. For the critical value, a distributional statement for $\vartheta_0=0$, i.e. a parameter $\fth_0$ at the boundary of the parameter space $\Theta^H$, is necessary. 
\begin{theorem}\label{satznormrand}
	Under assumptions \textup{\ref{A1:Compact}}-\textup{\ref{A5negdef}}, $\hat{\fth}_n$ defined after \eqref{eq:1} for $\fth_0 \in \Theta^H=(\varepsilon,1/\varepsilon) \times \{0\}$, the distribution of the sequence  $\sqrt{n}(\hat{\fth}_n-\fth_0)$ converges for $n \to \infty$ towards the mixture of distributions
	\[
	\Phi_{\theta_0}(\mathbf{a}) = \frac{1}{2}F_1^{\theta_0}(\mathbf{a}) +  \frac{1}{2}F_2^{\theta_0}(\mathbf{a})
	\]	
	with $\mathbf{a}=(a_{\theta},a_{\vartheta})'$, where $F_1^{\theta_0}$ is a two-dimensional distribution defined in $-\infty < a_{\theta} < \infty$ and  $a_{\vartheta}>0$ and having in this region the density equal to twice the density of $N_2(\mathbf{0}, \mathcal{I}(\boldsymbol{\theta}_0)^{-1})$. Furthermore $F_2^{\theta_0}$ is a one-dimensional distribution of $\sigma^{(2)}  \tilde{Y}_1$,	 
	concentrated on  $-\infty < a_{\theta} < \infty$ and  $a_{\vartheta}=0$.  The distribution of $\tilde{Y}_1$ is the distribution of $Y_1$, in $(Y_1,Y_2)'$ distributed as $N_2(\mathbf{0}, \mathcal{I}(\boldsymbol{\theta}_0))$, conditional on the inequality $Y_2 +  \mathcal{I}(\boldsymbol{\theta}_0)_{12} \sigma^{(2)} Y_1  \le 0$,
	with $\sigma^{(2)}:=(\mathcal{I}(\boldsymbol{\theta}_0)_{11})^{-1}$. 
\end{theorem}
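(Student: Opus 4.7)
I would combine a CLT for the score at $\boldsymbol{\theta}_0$ with a bivariate Taylor expansion, then split the probability space according to whether the unrestricted M-estimator stays in $\Theta$ or crosses the $\vartheta=0$ boundary. By Lemma \ref{lem1}, $\mathbb{E}_{\boldsymbol{\theta}_0}[\psi_{\boldsymbol{\theta}_0}(X_1,T_1)] = 0$, and by definition \eqref{deffi} the covariance equals $\mathcal{I}(\boldsymbol{\theta}_0)$; the integrable envelope of Lemma \ref{e18} delivers the bivariate CLT, $\sqrt{n}\Psi_n(\boldsymbol{\theta}_0) \stackrel{d}{\to} Y = (Y_1,Y_2)' \sim N_2(\mathbf{0},\mathcal{I}(\boldsymbol{\theta}_0))$. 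The ULLN used in Theorem \ref{cons} yields $\dot{\Psi}_n(\boldsymbol{\theta}_0) \stackrel{p}{\to} -\mathcal{I}(\boldsymbol{\theta}_0)$ through the information identity \eqref{ime}, and, writing $\tilde{\boldsymbol{\theta}}_n$ for a root of $\Psi_n$ in a shrinking neighbourhood of $\boldsymbol{\theta}_0$ when the constraint $\vartheta \ge 0$ is ignored, the same Taylor manipulation as in Theorem \ref{satznormoffen} produces $\sqrt{n}(\tilde{\boldsymbol{\theta}}_n - \boldsymbol{\theta}_0) = \mathcal{I}(\boldsymbol{\theta}_0)^{-1}\sqrt{n}\Psi_n(\boldsymbol{\theta}_0) + o_P(1) \stackrel{d}{\to} Z := \mathcal{I}(\boldsymbol{\theta}_0)^{-1}Y \sim N_2(\mathbf{0},\mathcal{I}(\boldsymbol{\theta}_0)^{-1})$.

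Next I would split on the sign of $\tilde{\vartheta}_n$, which is asymptotically the sign of $Z_2$; each half has limiting probability $\tfrac{1}{2}$ by symmetry of the mean-zero Gaussian. On the ``good'' half $\{\tilde{\vartheta}_n > 0\}$ the near-zero estimator from \eqref{eq:1} coincides with $\tilde{\boldsymbol{\theta}}_n$ with probability tending to one, so $\sqrt{n}(\hat{\boldsymbol{\theta}}_n - \boldsymbol{\theta}_0)$ conditionally has the law of $Z$ given $Z_2 > 0$; this has density twice the $N_2(\mathbf{0},\mathcal{I}(\boldsymbol{\theta}_0)^{-1})$-density on $\{a_\vartheta > 0\}$, which is $F_1^{\theta_0}$. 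On the ``bad'' half $\{\tilde{\vartheta}_n \le 0\}$ the nearest-boundary convention sets $\hat{\vartheta}_n = 0$ and $\hat{\theta}_n$ solves the reduced one-dimensional score equation $\Psi_{n,1}(\theta,0) = 0$; a scalar Taylor expansion at $\theta_0$, paired with the ULLN for $\partial\Psi_{n,1}/\partial\theta$ and the marginal instance of \eqref{ime}, gives $\sqrt{n}(\hat{\theta}_n - \theta_0) = \sigma^{(2)}\sqrt{n}\Psi_{n,1}(\theta_0,0) + o_P(1) \stackrel{d}{\to} \sigma^{(2)}Y_1$. Expanding $Z_2 = (\mathcal{I}(\boldsymbol{\theta}_0)^{-1}Y)_2$ in $Y_1, Y_2$ and using $\mathcal{I}(\boldsymbol{\theta}_0)_{11} > 0$ rewrites $\{Z_2 \le 0\}$ in the linear form $\{Y_2 + \mathcal{I}(\boldsymbol{\theta}_0)_{12}\sigma^{(2)}Y_1 \le 0\}$ of the theorem, so the conditional limit on $\{a_\vartheta = 0\}$ is $\sigma^{(2)}\tilde{Y}_1$, i.e.\ $F_2^{\theta_0}$. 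Adding the two half-weight pieces yields $\Phi_{\theta_0}$.

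The hardest step will be the rigorous identification, on the bad half, of $\hat{\boldsymbol{\theta}}_n$ with the one-dimensional restricted M-estimator on $\{\vartheta = 0\}$. The remaining faces of $\partial\Theta$ (the $\theta$-boundaries $\varepsilon,\,1/\varepsilon$ and $\vartheta = 1-\varepsilon_{\vartheta}$) lie at distance $O(1)$ from $\boldsymbol{\theta}_0$ while the score fluctuation is $O(n^{-1/2})$, so I would argue they are asymptotically unreachable and then verify that the ``nearest'' convention of Section \ref{secident} coincides with the partial optimisation obtained by fixing $\vartheta = 0$ and re-solving the $\theta$-score equation. The other ingredients — CLT, ULLN, information identity, and the joint-to-conditional calculation — are standard given the preparations already in place.
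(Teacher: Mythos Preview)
Your approach is essentially the paper's: CLT for $\sqrt{n}\Psi_n(\boldsymbol{\theta}_0)$, LLN/ULLN for $\dot{\Psi}_n$, a quadratic Taylor expansion with the third-order remainder controlled via an integrable dominator, and a case split leading to the two mixture components. The one structural difference is in how the split is organised. You introduce an auxiliary \emph{unconstrained} root $\tilde{\boldsymbol{\theta}}_n$ and partition on $\{\tilde{\vartheta}_n>0\}$ versus $\{\tilde{\vartheta}_n\le 0\}$, then argue that on the latter event the nearest-boundary rule coincides with solving the reduced equation $\Psi_{n,1}(\theta,0)=0$. The paper instead works directly with $\hat{\boldsymbol{\theta}}_n$ and invokes the KKT-type conditions $\Psi_{n,1}(\hat{\boldsymbol{\theta}}_n)=0$ always and $\Psi_{n,2}(\hat{\boldsymbol{\theta}}_n)\le 0$ when $\hat{\vartheta}_n=0$, then expands at $\boldsymbol{\theta}_0$. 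This sidesteps precisely the step you call hardest: the paper never needs to reconcile ``nearest boundary value'' with ``restricted M-estimator'' because it takes the first score coordinate to vanish on the boundary by construction. Your route is equally valid and perhaps conceptually cleaner (the split is on a single asymptotically normal quantity), but it does require the extra argument you flag.

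One small computational point: when you rewrite $\{Z_2\le 0\}$ with $Z=\mathcal{I}(\boldsymbol{\theta}_0)^{-1}Y$, the $2\times 2$ inversion gives $Z_2\propto -\mathcal{I}(\boldsymbol{\theta}_0)_{12}Y_1+\mathcal{I}(\boldsymbol{\theta}_0)_{11}Y_2$, so the half-space is $\{Y_2-\mathcal{I}(\boldsymbol{\theta}_0)_{12}\sigma^{(2)}Y_1\le 0\}$, which is exactly what the paper's proof derives; the plus sign in the theorem statement appears to be a typo there rather than something your argument should reproduce.
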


While the original work of A. Wald uses a linear Taylor expansion of the score and excludes the boundary of the parameter space, additional arguments given in \citet[][Theorem 1]{Moran1971441} allow to include the here important boundary. For our rather simple model, a quadratic Taylor expansion is readily available and \citet[][Theorem 5.41]{vaart1998} becomes applicable with cases $\hat{\vth}_n>0$ and $\hat{\vth}_n=0$. 

\begin{proof}
	By Taylor's theorem there exists a $\tilde{\fth}_n \in \{\fth_0 + t (\hat{\fth}_n-\fth_0) | t \in (0,1)\}$, such that - stacked to two coordinates of $\Psi_n$ with potentially different $\tilde{\fth}_n$ - 
	\begin{equation}\label{e20}
		\Psi_n(\hat{\fth}_n)=\Psi_n(\fth_0)+\dot{\Psi}_n(\fth_0)(\hat{\fth}_n-\fth_0)+\frac{1}{2}(\hat{\fth}_n-\fth_0)^{T}\ddot{\Psi}_n(\tilde{\fth}_n)(\hat{\fth}_n-\fth_0).
	\end{equation}
	Now $\Psi_{n,1}(\hat{\fth}_n)=0$ in any case, but $\Psi_{n,2}(\hat{\fth}_n)= 0$ if $\hat{\vth}_n>0$ and $\Psi_{n,2}(\hat{\fth}_n)\leq 0$ if $\hat{\vth}_n=0$.
	The term $\Psi_n(\fth_0)$ is the average of independent identically distributed random vectors $\psi_{\fth_0}(X_i, T_i)$ with $\mathbb{E}_{\fth_0}[\psi_{\fth_0}(X_i, T_i)]=0$. According to the central limit theorem the sequence $\sqrt{n}\Psi_n(\fth_0)$ converges in distribution towards $N_2(\mathbf{0}, \mathcal{I}(\fth_0))$. The first derivatives $\dot{\Psi}_n(\fth_0)$ in the second term converge (LLN) in probability towards  $\mathbb{E}_{\fth_0}[\dot{\psi}_{\fth_0}(X_1, T_1)]$. The second derivatives $\ddot{\Psi}_n(\tilde{\fth}_n)$ are a two-dimensional vector consisting of $(2 \times 2)$-matrices. According to assumption, there exists a $\delta>0$, such that $\ddot{\psi}_{\fth}(x,t)$ for all $\fth \in (\theta_0-\delta,\theta_0+\delta ) \times [\vth_0, \vth_0+\delta)$ is dominated by an integrable function $\ddot{\psi}(x,t)$. Due to consistency by Theorem \ref{cons}, the probability of $\{\hat{\fth}_n \in (\theta_0-\delta,\theta_0+\delta ) \times [\vth_0, \vth_0+\delta)\}$ converges towards one. On this set now holds
	\begin{equation*}
		\Vert \ddot{\Psi}_n(\tilde{\fth}_n) \Vert = \left\Vert \frac{1}{n} \sum_{i=1}^n \ddot{\psi}_{\tilde{\fth}_n}(X_i,T_i)\right\Vert \leq \frac{1}{n}\sum_{i=1}^n \Vert \ddot{\psi}(X_i,T_i) \Vert.
	\end{equation*}
	Finally due to the integrability of the function $\ddot{\psi}(x,t)$ and the LLN it is bounded in probability. The second and third term of \eqref{e20} can be written as
	\begin{multline*}
		[\mathbb{E}_{\fth_0}[\dot{\psi}_{\fth_0}(X_1, T_1)]+o_P(1)+\frac{1}{2}(\hat{\fth}_n-\fth_0)^{T}O_P(1)](\hat{\fth}_n-\fth_0) \\=	[ \mathbb{E}_{\fth_0}[\dot{\psi}_{\fth_0}(X_1, T_1)]+o_P(1)](\hat{\fth}_n-\fth_0).
	\end{multline*}
	In this case $o_P(1)$ and $O_P(1)$ are $(2 \times 2)$-matrices and a two-dimensional vector, respectively, of  $(2 \times 2)$-matrices, whose entries are sequences that converge towards zero or are bounded, respectively. The right side of the equality follows due to the consistency of Theorem \ref{cons}, which also contains the boundary, so that $(\hat{\fth}_n-\fth_0)'O_P(1)=o_P(1)'O_P(1)=o_P(1)$. By Assumption \ref{A5negdef}, the matrix $\mathbb{E}_{\fth_0}[\dot{\psi}_{\fth_0}(X_1, T_1)]+o_P(1)$ will be invertible for $n$ large enough, explicitly including the case $\vartheta_0=0$. 
	For the case $\hat{\vth}_n>0$ we have 
	\begin{align*}
		0&=\Psi_n(\fth_0)+[ \mathbb{E}_{\fth_0}[\dot{\psi}_{\fth_0}(X_1, T_1)]+o_P(1)](\hat{\fth}_n-\fth_0) \qquad &\Leftrightarrow\\
		\hat{\fth}_n-\fth_0&=-[ \mathbb{E}_{\fth_0}[\dot{\psi}_{\fth_0}(X_1, T_1)]+o_P(1)]^{-1}\Psi_n(\fth_0)  &\Leftrightarrow\\
		\sqrt{n} (\hat{\fth}_n-\fth_0)&=-( \mathbb{E}_{\fth_0}[\dot{\psi}_{\fth_0}(X_1, T_1)])^{-1}\frac{1}{\sqrt{n}} \sum_{i=1}^n \psi_{\fth_0}(X_i,T_i)+o_P(1),
	\end{align*}
	such that $\sqrt{n} (\hat{\fth}_n-\fth_0)$, conditional on $\hat{\vth}_n>0$, towards a two-dimensional distribution  $F_1^{\fth_0}(a_{\theta}, a_{\vth})$ whose density is zero for $a_{\vth}\leq0$ and two times the density function of the normal distribution $N_2(\mathbf{0}, \mathcal{I}(\fth_0)^{-1})$ for $a_{\vth}>0$. For the case $\hat{\vth}_n=0$ one has 
	\begin{align*}
		\Psi_n(\hat{\fth}_n)&= \Psi_n(\fth_0)+[\mathbb{E}_{\fth_0}[\dot{\psi_{\fth}}](X_1,T_1)+o_P(1)](\hat{\fth}_n-\fth_0)\\
		&=\left(\begin{array}{c} \Psi_{n,1}(\fth_0) \\ \Psi_{n,2}(\fth_0) \end{array}\right)-\left( \begin{array}{cc} \mathcal{I}(\fth_0)_{11} & \mathcal{I}(\fth_0)_{12} \\ \mathcal{I}(\fth_0)_{21} & \mathcal{I}(\fth_0)_{22}\end{array}\right) \left( \begin{array}{c} \hat{\theta}_n-\theta_0 \\ 0 \end{array}\right)+ \left( \begin{array}{c} o_P(1) \\ o_P(1) \end{array}\right)\\
		&=\left(\begin{array}{c} \Psi_{n,1}(\fth_0) \\ \Psi_{n,2}(\fth_0) \end{array}\right)-\left(\begin{array}{c} \mathcal{I}(\fth_0)_{11} (\hat{\theta}_n-\theta_0) \\ \mathcal{I}(\fth_0)_{21} (\hat{\theta}_n-\theta_0) \end{array}\right)+ \left( \begin{array}{c} o_P(1) \\ o_P(1) \end{array}\right).
	\end{align*}
	With $\Psi_{n,1}(\hat{\fth}_n)=0$ und $\Psi_{n,2}(\hat{\fth}_n)\leq 0$ hence, it follows
	\begin{align}\label{e21}
		0&= \Psi_{n,1}(\fth_0)-\mathcal{I}(\fth_0)_{11} (\hat{\theta}_n-\theta_0)+o_P(1) \qquad \Leftrightarrow\nonumber\\
		\hat{\theta}_n-\theta_0&=(\mathcal{I}(\fth_0)_{11})^{-1}\Psi_{n,1}(\fth_0)+o_P(1)
	\end{align}
	and
	$0\geq \Psi_{n,2}(\fth_0)-\mathcal{I}(\fth_0)_{21} (\hat{\theta}_n-\theta_0)+o_P(1)$.
	Furthermore, by insertion of equation \eqref{e21} it follows
	$0\geq\Psi_{n,2}(\fth_0)-\mathcal{I}(\fth_0)_{21}(\mathcal{I}(\fth_0)_{11})^{-1}\Psi_{n,1}(\fth_0) +o_P(1)$.
	Hence, under the condition $0\geq Y_2-\mathcal{I}(\fth_0)_{21}(\mathcal{I}(\fth_0)_{11})^{-1}Y_1$,  $\sqrt{n}(\hat{\theta}_n-\theta_0)$ is  asymptotically normal with expectation  $0$ and variance $(\mathcal{I}(\fth_0)_{11})^{-1}$, where $(Y_1,Y_2) \sim N_2(\mathbf{0}, \mathcal{I}(\fth_0))$. \qed
\end{proof}

Note that under the independence hypothesis of the test $H_0: \vartheta_0=0$, by Theorem \ref{satznormrand}, the probability of a boundary value $\hat{\boldsymbol{\theta}}_n$, i.e. with $\hat{\vartheta}_n=0$, has probability 0.5, as announced in Section \ref{secident}.  

For performing the test against $H_0$ on basis of a too large $\hat{\vartheta}$ using Theorem \ref{satznormrand}, the estimation of $\mathcal{I}(\fth_0)$ under $H_0$ is needed. Denote now by $\hat{\boldsymbol{\theta}}^0$ the restricted estimator, under $H_0$, namely $\hat{\boldsymbol{\theta}}^0=(\hat{\theta}^0, 0)$. Now $\mathcal{I}(\hat{\boldsymbol{\theta}}^0)$ can be replaced by a consistent estimate due to Slusky's Lemma \cite[see e.g.][Lemma 2.8]{vaart1998}, e.g. due to the LLN by 
\[
\frac{1}{n}\sum_{i=1}^n \psi_{\hat{\boldsymbol{\theta}}^0}(X_i,T_i) \psi_{\hat{\boldsymbol{\theta}}^0}(X_i,T_i)'=\frac{1}{n}\sum_{j=1}^{M} \psi_{\hat{\boldsymbol{\theta}}^0}(\tilde{X}_j,\tilde{T}_j)\psi_{\hat{\boldsymbol{\theta}}^0}(\tilde{X}_j,\tilde{T}_j)'
\] \cite[see e.g.][Remark 17.4]{gou2}. Note that the three $\boldsymbol{\theta}$'s in the definition of $\mathcal{I}(\boldsymbol{\theta})$ in \eqref{deffi} are treated differently. The first, in $\mathbb{E}_{\boldsymbol{\theta}}$,
is implicitly replaced by $\boldsymbol{\theta}_0$, being $(\theta_0,0)'$ under $H_0$, by averaging with respect to the respective distribution. Those in $\psi_{\fth}$ are explicitly replaced by $\hat{\boldsymbol{\theta}}^0$.

\subsection{Empirical example} \label{empexa}

We reconsider, from \citet{weiswied2021}, $m=55{,}279$ enterprise lifetimes $\tilde{x}_j$ ending in the observation period 2013-2015 (see Figure \ref{exa0}). Those had been at risk of closure for  $\sum_{j=1}^m \tilde x_j=0{.}54$ million years. Also, for each enterprise, the date of foundation, and hence the age at the beginning of 2013, $\tilde{t}_j$, is known. For Assumptions \ref{A1:Compact}, \ref{A2:SquareInt} and \ref{A4:trunc}, without dependence, $\hat{\theta}^0 = 0.08261$ had been reported.

\subsubsection{Testing for a positive trend in life expectancy} \label{empexa1}

The estimate defined after \eqref{eq:1} for the model given by Assumptions \ref{A1:Compact}-\ref{A5negdef} is the minimum of the negative profile likelihood depicted in Figure \ref{appl_power} (left). Irrespective of $\theta$, the likelihood decreases as function of $\vartheta$, even though slowly, so that $\hat{\vartheta}_n=0$. Similar to but, not equal to the restriction of independence,  $\hat{\theta}_n = 0.08257$.  Under the hypothesis, $H_0: \vartheta_0=0$, such boundary value, not indicating a positive trend in life expectancy, has by Theorem \ref{satznormrand} a probability 0.5, being also the $p$-value of the test.

\begin{figure}[]
	\centering
	\begin{minipage}{0.45\linewidth}
		\includegraphics[width=6cm]{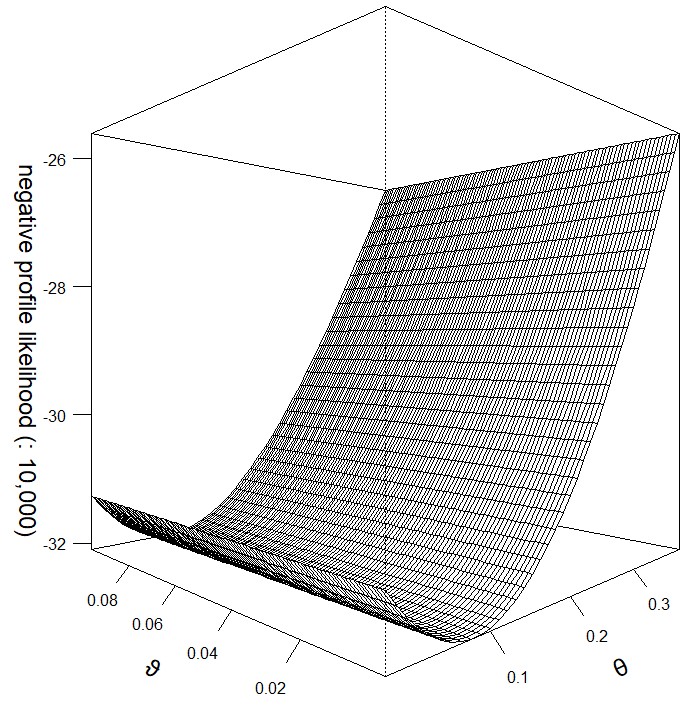} 
	\end{minipage}
	\hspace{0.3cm}
	\begin{minipage}{0.48\linewidth}
		\includegraphics[width=6.8cm]{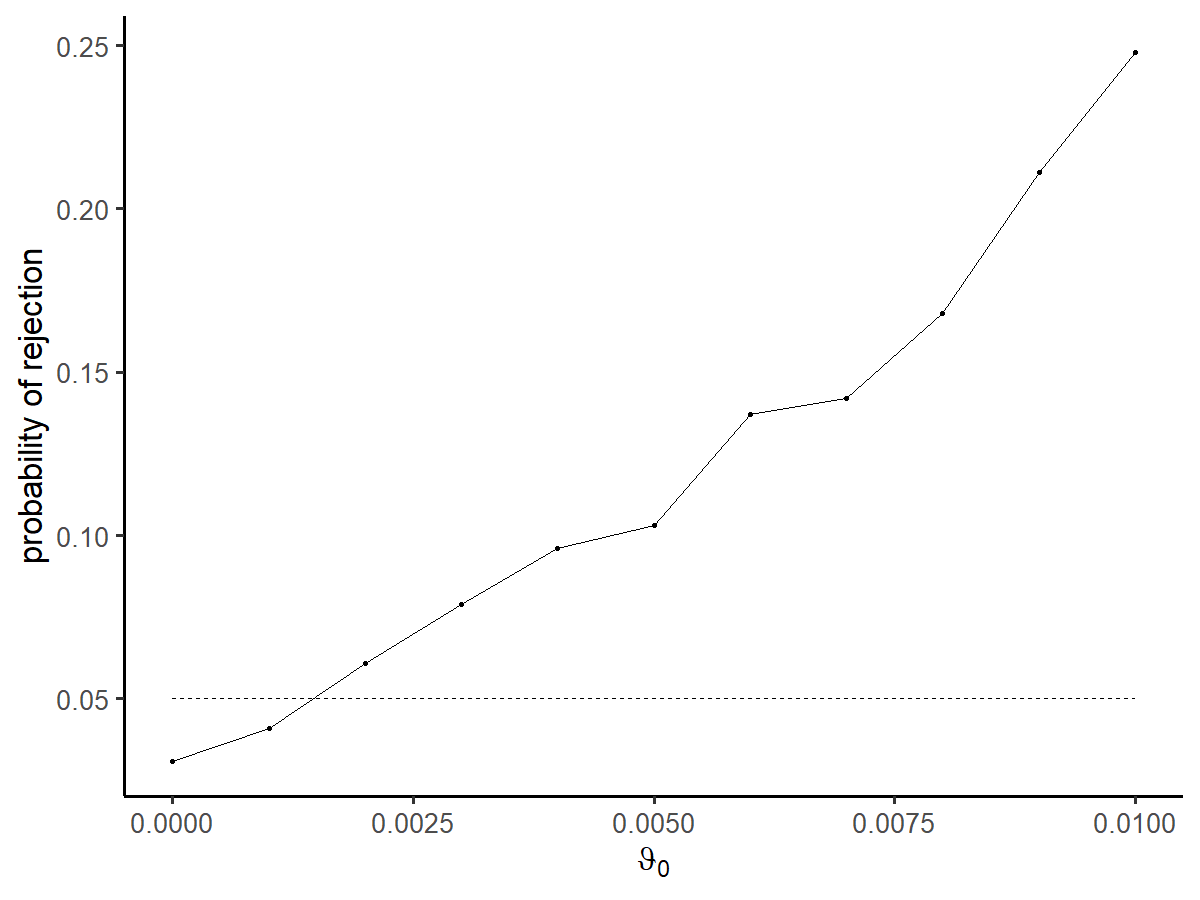} 
	\end{minipage}
	\caption{Left: Profile (logarithmic) likelihood for example /Right: Simulated actual level ($\vartheta_0=0$) and power ($\vartheta_0>0$) of test for independence given by Theorem \ref{satznormrand} (under conditions of example in Section \ref{empexa} using $\theta_0=0.08$, $G=24$ and $s=3$)}
	\label{appl_power}
	(Explanation of panels is distributed over larger parts of text.)
\end{figure}

\subsubsection{Testing for a negative trend in life expectancy}

Given the result in Section \ref{empexa1}, even a negative trend in German business life expectancy seems conceivable and we model it with the Farlie-Gumbel-Morgenstern copula introduced in the Assumption (FGM) (replacing Assumption (A3)). Note that, in contrast to a ordinary linear regression, it is not be possible to calculate a negative trend with the Gumbel-Barnett copula by transforming the dependent variable.
Again as in \eqref {e6} we estimate the parameter $\fth$ with the first two coordinates of 
\begin{align*}
	\argmax\limits_{\fth \in \Theta, n \in \mathbb{N}} \,\log \ell_{FGM} (\fth,n)
\end{align*}
after profiling out $n$. Here $\ell_{FGM}$ replaces $\ell$ (of \eqref{e5}) with $f_{\fth}$  instead of $f^{FGM}_{\fth}$, and  $\alpha_{\boldsymbol{\theta}}$ by $\alpha_{\boldsymbol{\theta}}^{FGM}$. It results in $\hat{\theta}^{FGM}_n=0{.}0817$, i.e. the dependence model suggests a similar average lifetime of $\approx 12{.}24$ years as in \citet{weiswied2021}, without dependence. What is new is that $\hat{\vth}^{FGM}_n=0{.}10$, which corresponds to a Kendall's $\tau$ of $0{.}03$ and a negative time trend. With
$\mathbb{E}[X|T=t]=\int_0^{\infty} x f_{\fth}^{FGM}(x,t)/f^T(t) dx =\theta^{-1}[1-0{.}5 \vth^{FGM} (1-2t/G)]$, an annual decrease in the life expectancy of $\hat{\vth}^{FGM}_n/(\theta G)$, i.e. of around 19 days, results.

\section{Behaviour in finite samples} \label{mcs}
 
We conduct a Monte Carlo simulation, primarily to visualize the asymptotic results on consistency given by Theorem \ref{cons}, measured in mean squared error (MSE), decomposed into bias and variance. In particular, we will find that the asymptotic approximation is rather precise regarding our statements on basis of business closure data in Section \ref{empexa}. Also the actual level and power of the test, given by Theorem \ref{satznormrand}, are studied.

\subsection{Algorithm for simulating truncated sample} \label{algosim}

In order to generate the latent sample of $n$ measurements $(X_i,T_i)'$ of Section \ref{secpop2}, consider the conditional inversion method using a copula $C$ \cite[see][Section 2.9]{B3}. Note first that the inverse of $c_u^{\vartheta}$, introduced shortly after \eqref{alpha0}, exists. 

\begin{algo}\label{algo1} Generation of $(X_i,T_i)'$ under Assumptions \ref{A1:Compact}-\ref{A3:Ind}: 
	\begin{itemize}[itemsep=0pt]
		\item[(i)] Generate independent realisations $U$ and $\check{V}$ from $Unif(0,1)$. 
		\item[(ii)] Set $V=(c_u^{\vartheta})^{-1}(\check{V})$.
		\item[(iii)] Set $X_i=F_E^{-1}(U)$ and $T_i=(F^T)^{-1}(V)$.
	\end{itemize}
\end{algo}
The observations $(\tilde x_1, \tilde t_1)' \ldots, (\tilde x_m, \tilde t_m)'$ of Section \ref{secpop3} then arise by imposing Assumption \ref{A4:trunc}, i.e. by truncating $(X_i,T_i)' \not\in D$.

\subsection{Choices for parameters and sample size}

The simulation extends the case of independent truncation, that is, $\vartheta_0=0$ in \cite{weiswied2021}. 
The considered sample sizes $n \in \{10^p, p=3, \ldots, 5 \}$, the widths of the population $G \in \{24,48\}$ and of the observation period $s \in \{2,3,48\}$ are the same as in \citet[][Section 4]{weiswied2021}. For the exponential parameter of Assumption \ref{A2:SquareInt}, we choose $\theta_0 \in \{0.1, 0.05\}$, being an excerpt of \citet[][Section 4]{weiswied2021}. We now newly include values $\vartheta_0 \in \{0.001,0.01\}$ for copula dependence of Assumption \ref{A3:Ind}, leading to Kendall's $\tau \in \{-0.0005, - 0.005 \}$. By doing so, we model weak dependence.

\subsection{Result} \label{simressec}

Each scenario consists of $G$, $s$, $n$, $\theta_0$ and $\vartheta_0$. A first impression of the asymptotic fit can be gained for the test on independence given by Theorem \ref{satznormrand}. Figure \ref{appl_power} (right) depicts simulated rejection rates of simulated datasets (as of Section \ref{algosim}). The rate is the actual level of the test at nominal level of 5\% for $\vartheta_0=0$ and it approximates the power for $\vartheta_0>0$. It can be seen that the test is slightly conservative, as the actual is below 5\% at the origin, but quickly exceeds the nominal level and has power of 25\% already at a value as small as $\vartheta_0=0.01$.  

We now study the bias and variance of $\hat{\fth}_n$. The finite sample biases of the estimators $\hat{\theta}$ and  $\hat{\vartheta}$ as zeros of the system of equations \eqref{eq:1} (here and in Table \ref{sims} omitting the subscript $n$) are approximated from the $R=1000$ simulated data sets, due to Algorithm \ref{algo1} by $\frac{1}{R}\sum_{\nu=1}^R \hat{\theta}^{(\nu)} - \theta_0$ and $\frac{1}{R}\sum_{\nu=1}^R \hat{\vartheta}^{(\nu)} - \vartheta_0$.  Table \ref{sims} in Appendix \ref{ressec} lists the results, and it can be seen that the bias of $\hat{\theta}$ decreases to virtually zero as a function of $n$ for all scenarios, all $n$ are smaller than in our example of Section \ref{empexa}. The bias of $\hat{\vartheta}$ is markedly larger than of $\hat{\theta}$ in general, but also decreases in $n$. 

In order to conclude consistency in probability, consider the MSE as the sum of squared bias and variance $Var(\hat{\theta})$ (alike for $\vartheta$). 
The simulated approximation of the variance is $\frac{1}{R}\sum_{\nu=1}^R (\hat{\theta}^{(\nu)} - \theta_0)^2$ (alike for $\hat{\vartheta}$). Evident from Table \ref{sims} is the generally small variance of $\hat{\theta}$, quickly decreasing in $n$, and the quite large and also decreasing variance of $\hat{\vartheta}$. Hence the MSE's are approaching zero and consistency is visible for realistic sample sizes. In that respect, note that for $\vartheta_0=0$, the number of observations $m$ is around 1\%-8\% of the sample size $n$  \cite[see][Table 1]{weiswied2021} for chosen $\theta_0$'s and similarly for $\vartheta_0 \in \{0.001, 0.01\}$.

The influence of $G$ and $s$, is, as expected, that for large $s$, i.e. by observing more units, the insecurity about the parameter, i.e. its variance, decreases. For instance, for the scenarios with $G=24$, combined with $s=2$, $s=3$ or $s=48$, exhibit the tendency that  $Var(\hat{\theta})$ is much smaller for $s=48$. The effect of different $G$ is mixed. Simulations not shown here exhibit that the estimation variances first increase, as a function of $G$, then decrease and later increase again. Normality for small sample sizes, as indicated by Theorem \ref{satznormoffen}, will be valid as in the case of independent left-truncation \cite[see][Appendix A.1.4]{weissbachm2021effect} and is not studied in detail. 

As direct comparison with independence \citep{weiswied2021}. It is mainly evident that the dependence introduces a (higher) bias in the estimation of $\theta_0$. For instance, in the scenario with $G=24,s=3$ and $\theta_0=0.05$, the bias is roughly ten times higher for all $n$.

\section{Discussion} \label{secdiscussion}

In some sense, our study follows up on \cite{efron1999} who assumed independence between units when sampling from the latent population and at the same time also for sampling from the truncated population. We describe how the two assumptions can be both true.   

In view of applicability of the results, the presented model is, at least, larger than the model without dependence in \cite{weiswied2021}. However, the model is still very small, and especially assuming the distribution of the lifetime to be exponential can be inadequate, e.g. in human demography. At the other hand, a nonparametric model generates considerable algorithmic effort \cite[see e.g.][]{efron1999,shen2010} and the resulting functional estimator still does not allow statements about popular characteristics such as the expectation or any quantile. Against our assumption of a uniform truncation distribution is a recent finding of \cite{weisdoer2022}, who showed that business foundations became less and less frequent over the years 1990-2013, however assuming independent truncation. And of course the dependence model could be inadequate and \cite{chiou2019} call for a goodness-of-fit test. Furthermore a covariate can be available and informative, it may, for instance, reduce or substitute the dependence. An important example is that truncation dependence can be interpreted as dependence of the lifetime on the date of birth as a cohort effect. And comparisons with methods which incorporate calendar time as a time-dependent covariate found in \cite{renxi2019,franchae2019} should be interesting. For any covariate, such as place of business, it should also be noted that, together with the parameter which relates the covariate to the hazard rate, the marginal distribution of the covariate introduces parameters that can be estimated, or conditioning must be studied in order to avoid a joint estimation \cite[e.g. as in][]{weisdoer2022}.  

In the view of theory, inference about the unknown sample size $n$ would be interesting, if relevant applications can be found. 

In the view of our particular measurement, the elementary biological question of the time between a well-defined birth and a well-defined death, appears to overly simplify business demography. Even when we only wish to study business closure as analogous to human death. In fact, the data in Section \ref{empexa} contain `only' insolvencies, i.e. only closures for one particular cause. Data for a competing risk model would be needed. Richer data would probably then be left-truncated and right-censored (LTRC), rather than DT. Our test can still be applied to LTRC data by dropping the right-censored observations.

\textbf{Conflict of interest}:
 The authors declare not to have conflict of interest.
 

\begin{thebibliography}{30}
	\providecommand{\natexlab}[1]{#1}
	\providecommand{\url}[1]{\texttt{#1}}
	\expandafter\ifx\csname urlstyle\endcsname\relax
	\providecommand{\doi}[1]{doi: #1}\else
	\providecommand{\doi}{doi: \begingroup \urlstyle{rm}\Url}\fi
	
	\bibitem[Andersen et~al.(1988)Andersen, Borgan, Gill, and Keiding]{And0}
	P.~Andersen, {\O}.~Borgan, R.~Gill, and N.~Keiding.
	\newblock Censoring, truncation and filtering in statistical models based on
	counting processes.
	\newblock In N.~U. Prabhu, editor, \emph{Statistical inference from stochastic
		processes}, volume~80, pages 19--60. Center for Mathematics and Computer
	Science, Amsterdam, 1988.
	
	\bibitem[Chiou et~al.(2019)Chiou, Austin, Qian, and Betensky]{chiou2019}
	S.H. Chiou, M.D. Austin, J.~Qian, and R.A. Betensky.
	\newblock Transformation model estimation of survival under dependent
	truncation and independent censoring.
	\newblock \emph{Statistical Methods in Medical Research}, 28:\penalty0
	3785--3798, 2019.
	
	\bibitem[Czado and {van Keilegom}(2023)]{czado}
	C.~Czado and I.~{van Keilegom}.
	\newblock Dependent censoring based on parametric copulas.
	\newblock \emph{Biometrika}, 110:\penalty0 721--738, 2023.
	
	\bibitem[Daley and Vere-Jones(2003)]{daljo}
	D.J. Daley and D.~Vere-Jones.
	\newblock \emph{An Introduction to the Theory of Point Processes}, volume~I.
	\newblock Springer, New York, 2$^{nd}$ edition, 2003.
	
	\bibitem[{D\"orre}(2020)]{doerre2020}
	A.~{D\"orre}.
	\newblock Bayesian estimation of a lifetime distribution under double
	truncation caused by time-restricted data collection.
	\newblock \emph{Statistical Papers}, 61:\penalty0 945--965, 2020.
	
	\bibitem[{D\"orre} and Emura(2019)]{doerr2019}
	A.~{D\"orre} and T.~Emura.
	\newblock \emph{Analysis of Doubly Truncated Data: An Introduction}.
	\newblock SpringerBriefs in Statistics. Springer, Singapore, 2019.
	
	\bibitem[Efron and Petrosian(1999)]{efron1999}
	B.~Efron and V.~Petrosian.
	\newblock Nonparametric methods for doubly truncated data.
	\newblock \emph{Journal of the American Statistical Association}, 94:\penalty0
	824--834, 1999.
	
	\bibitem[Elstrodt(2018)]{B8}
	J.~Elstrodt.
	\newblock \emph{Maß- und Integrationstheorie}.
	\newblock Springer, Berlin, 8$^{th}$ edition, 2018.
	
	\bibitem[Emura and Pan(2020)]{emura2020}
	T.~Emura and C.-H. Pan.
	\newblock Parametric likelihood inference and goodness-of-fit for dependently
	left-truncated data, a copula approach.
	\newblock \emph{Statistical Papers}, 61:\penalty0 479--501, 2020.
	
	\bibitem[Emura et~al.(2017)Emura, Hu, and Konno]{emura2017}
	T.~Emura, Y.-H. Hu, and Y.~Konno.
	\newblock Asymptotic inference for maximum likelihood estimators under the
	special exponential family with double-truncation.
	\newblock \emph{Statistical Papers}, 58:\penalty0 877--909, 2017.
	
	\bibitem[Frank et~al.(2019)Frank, Chae, and Kim]{franchae2019}
	G.~Frank, M.~Chae, and Y.~Kim.
	\newblock Additive time-dependent hazard model with doubly truncated data.
	\newblock \emph{Journal of the Korean Statistical Society}, 48:\penalty0
	179--193, 2019.
	
	\bibitem[Gouri{\'e}roux and Monfort(1995)]{gou2}
	Ch. Gouri{\'e}roux and A.~Monfort.
	\newblock \emph{Statistics and Econometric Models}, volume~2.
	\newblock Cambridge University Press, Cambridge, 1995.
	
	\bibitem[Gumbel(1960)]{gum1960}
	E.J. Gumbel.
	\newblock Bivariate exponential distributions.
	\newblock \emph{Journal of the American Statistical Association}, 55:\penalty0
	698--707, 1960.
	
	\bibitem[Heckman(1976)]{heckman1976}
	J.J. Heckman.
	\newblock The common structure of statistical models of truncation, sample
	selection and limited dependent variables and a simple estimator for such
	models.
	\newblock \emph{Annals of Economic and Social Measurement}, 5:\penalty0
	475--492, 1976.
	
	\bibitem[Kalbfleisch and Lawless(1989)]{kalblawl1989}
	J.D. Kalbfleisch and J.F. Lawless.
	\newblock Inference based on retrospective ascertainment: An analysis of the
	data on transfusion-related aids.
	\newblock \emph{Journal of the American Statistical Association}, 84:\penalty0
	360--372, 1989.
	
	\bibitem[Moran(1971)]{Moran1971441}
	P.A.P. Moran.
	\newblock Maximum-likelihood estimation in non-standard conditions.
	\newblock \emph{Mathematical Proceedings of the Cambridge Philosophical
		Society}, 70:\penalty0 441 – 450, 1971.
	
	\bibitem[Moreira and de~U{\~n}a-{\'A}lvarez(2010)]{moreira2010a}
	C.~Moreira and J.~de~U{\~n}a-{\'A}lvarez.
	\newblock A semiparametric estimator of survival for doubly truncated data.
	\newblock \emph{Statistics in Medicine}, 29:\penalty0 3147--3159, 2010.
	
	\bibitem[Moreira et~al.(2021)Moreira, de~U{\~n}a-{\'A}lvarez, and
	Braekers]{morei2021}
	C.~Moreira, J.~de~U{\~n}a-{\'A}lvarez, and R.~Braekers.
	\newblock Nonparametric estimation of a distribution function from doubly
	truncated data under dependence.
	\newblock \emph{Computational Statistics}, 36:\penalty0 1693--1720, 2021.
	
	\bibitem[Nelsen(2006)]{B3}
	R.B. Nelsen.
	\newblock \emph{An Introduction to Copulas}.
	\newblock Springer, New York etc., 2$^{nd}$ edition, 2006.
	
	\bibitem[Newey and McFadden(1994)]{B7}
	W.K. Newey and D.L. McFadden.
	\newblock Large sample estimation and hypothesis testing.
	\newblock In R.F. Engle and D.L. McFadden, editors, \emph{Handbook of
		Econometrics}, volume~4, chapter~36, pages 2113--2241. Elsevier, 1994.
	
	\bibitem[Oeppen and Vaupel(2002)]{oepvau2002}
	J.~Oeppen and J.W. Vaupel.
	\newblock Broken limits to life expectancy.
	\newblock \emph{Science}, 296:\penalty0 1029--1031, 2002.
	
	\bibitem[Reiss(1993)]{reiss1993}
	R.-D. Reiss.
	\newblock \emph{A Course on Point Processes}.
	\newblock Springer, New York, 1993.
	
	\bibitem[Rennert and Xie(2018)]{renxi2019}
	L.~Rennert and S.X. Xie.
	\newblock Cox regression model with doubly truncated data.
	\newblock \emph{Biometrics}, 74:\penalty0 725--733, 2018.
	
	\bibitem[Rennert and Xie(2022)]{renxi2022}
	L.~Rennert and S.X. Xie.
	\newblock Cox regression model under dependent truncation.
	\newblock \emph{Biometrics}, 78:\penalty0 460--473, 2022.
	
	\bibitem[Shen(2010)]{shen2010}
	P.-S. Shen.
	\newblock Nonparametric analysis of doubly truncated data.
	\newblock \emph{Annals of the Institute of Statistical Mathematics},
	62:\penalty0 835--853, 2010.
	
	\bibitem[van~der Vaart(1998)]{vaart1998}
	A.~van~der Vaart.
	\newblock \emph{Asymptotic Statistics}.
	\newblock Cambridge University Press, Cambridge etc., 1998.
	
	\bibitem[Vo\ss{} and {Wei\ss bach}(2014)]{vossweis2014}
	S.~Vo\ss{} and R.~{Wei\ss bach}.
	\newblock A score-test on measurement errors in rating transition times.
	\newblock \emph{Journal of Econometrics}, 180:\penalty0 16--29, 2014.
	
	\bibitem[Weißbach et~al.(2023)Weißbach, Dörre, Wied, Doblhammer, and
	Fink]{weissbachm2021effect}
	R.~Weißbach, A.~Dörre, D.~Wied, G.~Doblhammer, and A.~Fink.
	\newblock Left-truncated health insurance claims data: Theoretical review and
	empirical application.
	\newblock \emph{AStA Advances in Statistical Analysis}, 2023.
	\newblock \doi{10.1007/s00362-021-01272-x}.
	
	\bibitem[{Wei\ss bach} and {D\"orre}(2022)]{weisdoer2022}
	R.~{Wei\ss bach} and A.~{D\"orre}.
	\newblock Retrospective sampling of survival data based on a poisson birth
	process: Conditional maximum likelihood.
	\newblock \emph{Statistics}, 56:\penalty0 844--866, 2022.
	
	\bibitem[{Wei\ss bach} and Wied(2022)]{weiswied2021}
	R.~{Wei\ss bach} and D.~Wied.
	\newblock Truncating the exponential with a uniform distribution.
	\newblock \emph{Statistical Papers}, 63:\penalty0 1247--1270, 2022.
	
\end{thebibliography}

\appendix
\renewcommand{\theequation}{\thesection.\arabic{equation}}
\numberwithin{equation}{section}
\newpage 
\section{Visualization of Gumbel-Barnett copula} \label{copvis}
\begin{figure}[H]
	\centering
	\begin{minipage}{0.49\linewidth}
		\includegraphics[width=5.5cm]{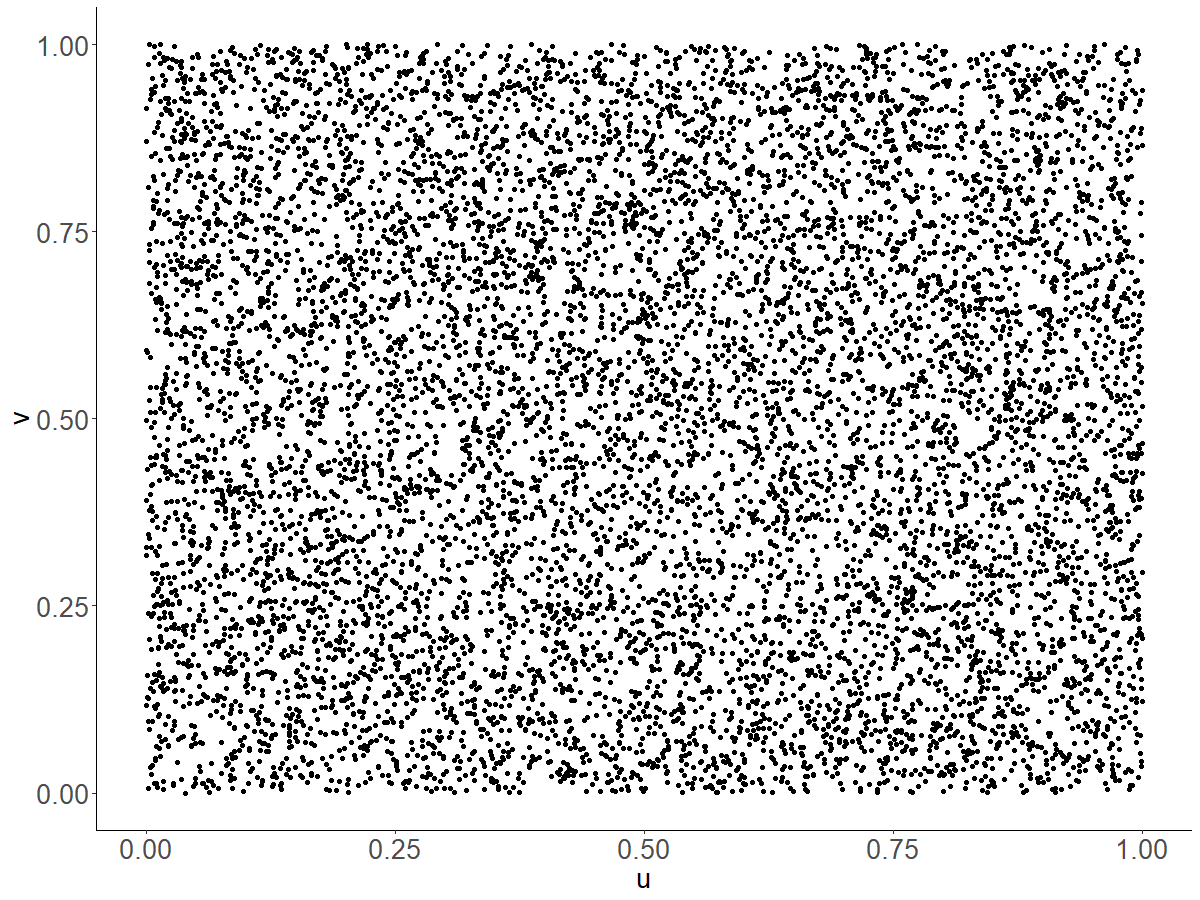}
	\end{minipage}
	\begin{minipage}{0.5\linewidth}
		\includegraphics[width=5.5cm]{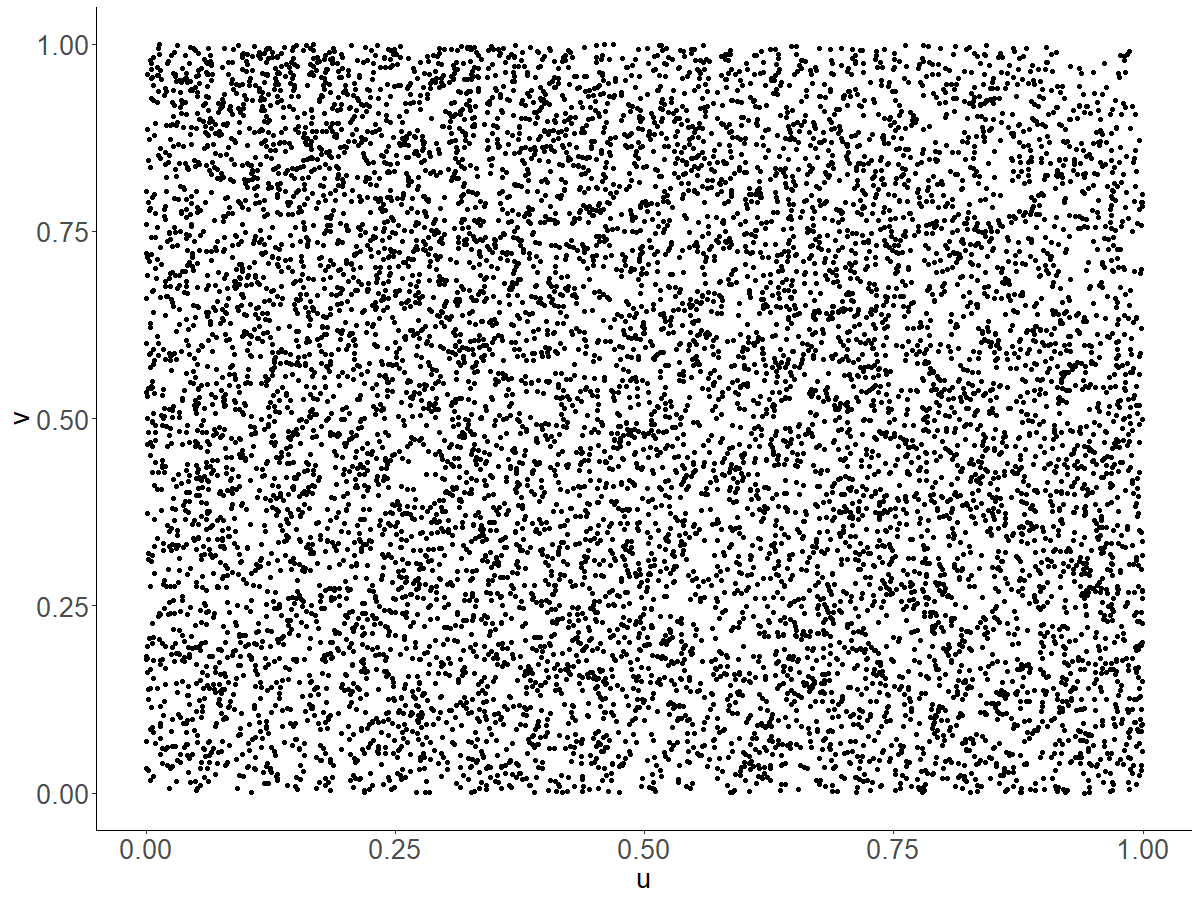}
	\end{minipage}
	
	\begin{minipage}{0.49\linewidth}
		\includegraphics[width=5.5cm]{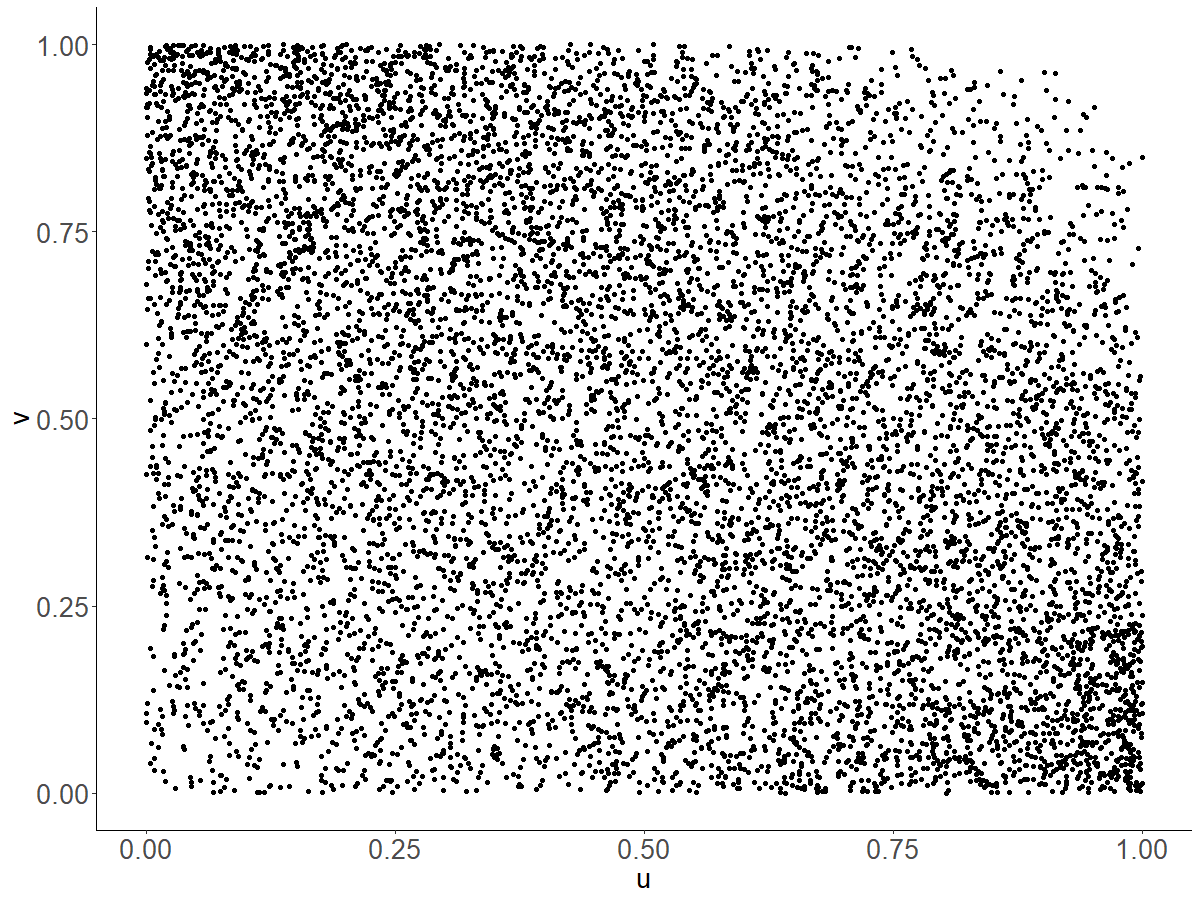}
	\end{minipage}
	\begin{minipage}{0.5\linewidth}
		\includegraphics[width=5.5cm]{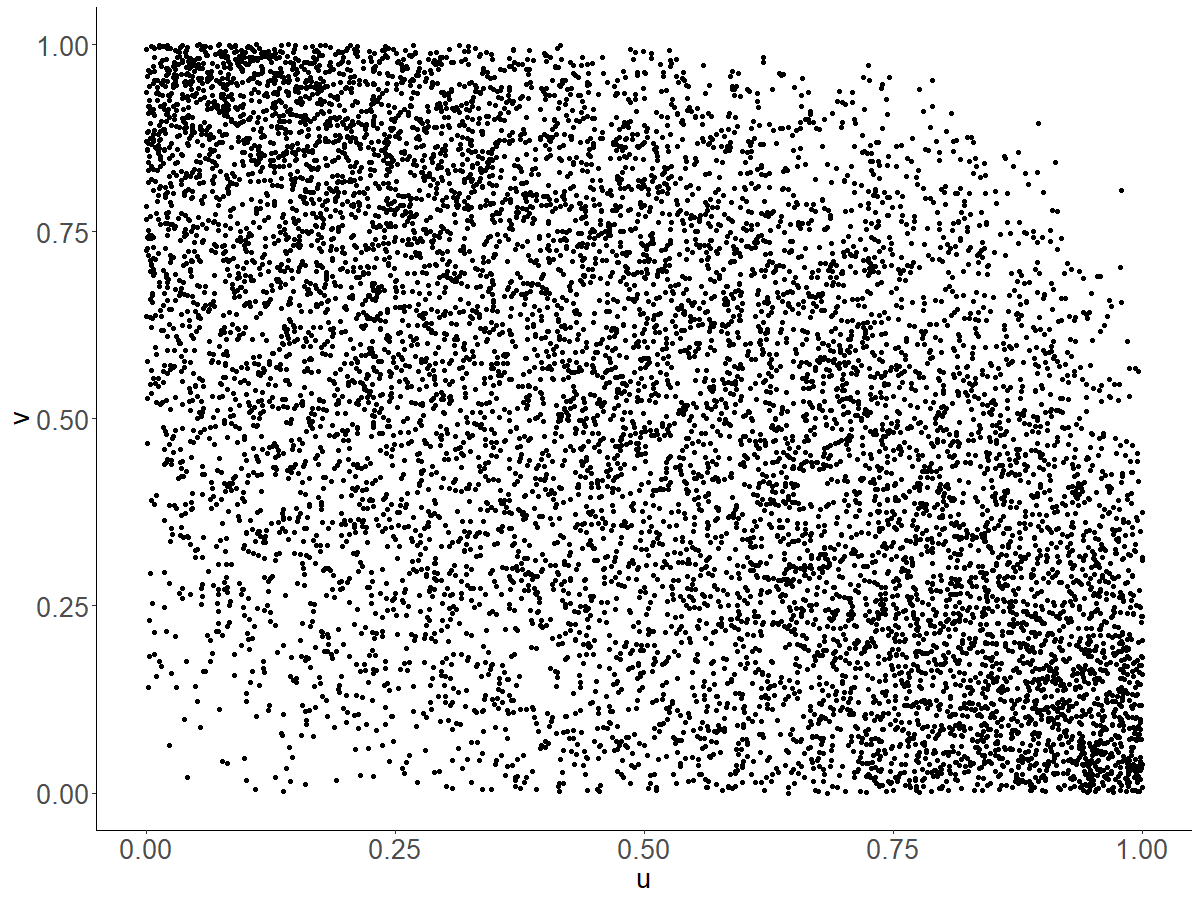}
	\end{minipage}
	
	\begin{minipage}{0.49\linewidth}
		\includegraphics[width=5.5cm]{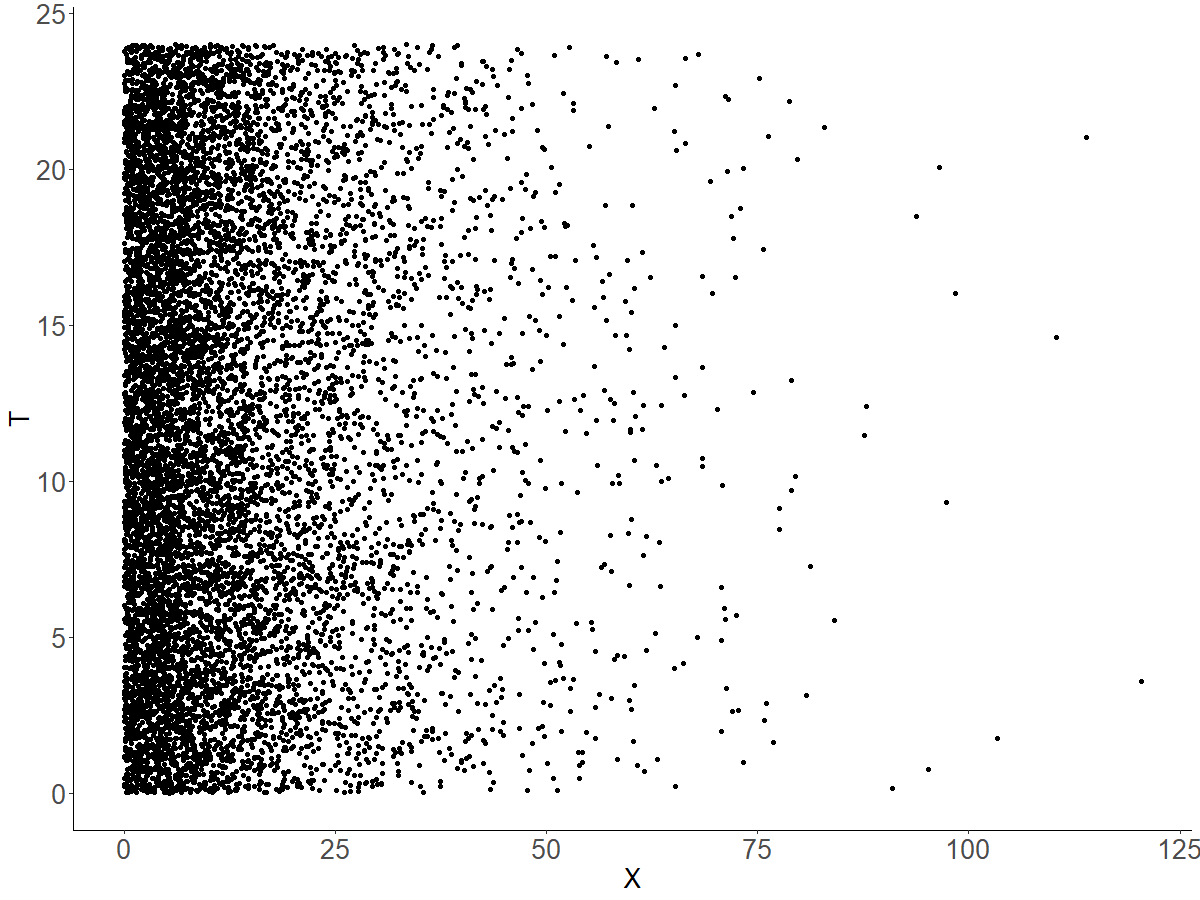}
	\end{minipage}
	\begin{minipage}{0.5\linewidth}
		\includegraphics[width=5.5cm]{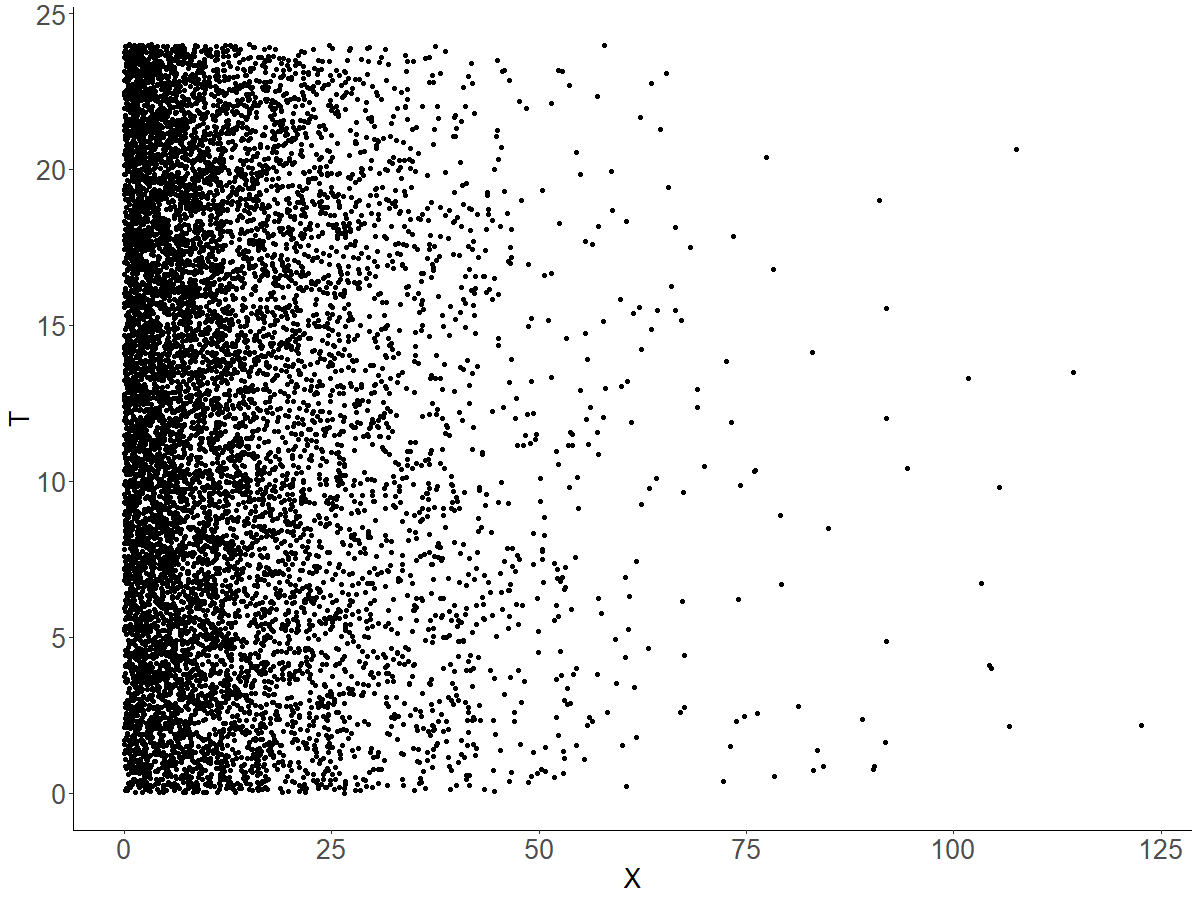}
	\end{minipage}
	
	\begin{minipage}{0.49\linewidth}
		\includegraphics[width=5.5cm]{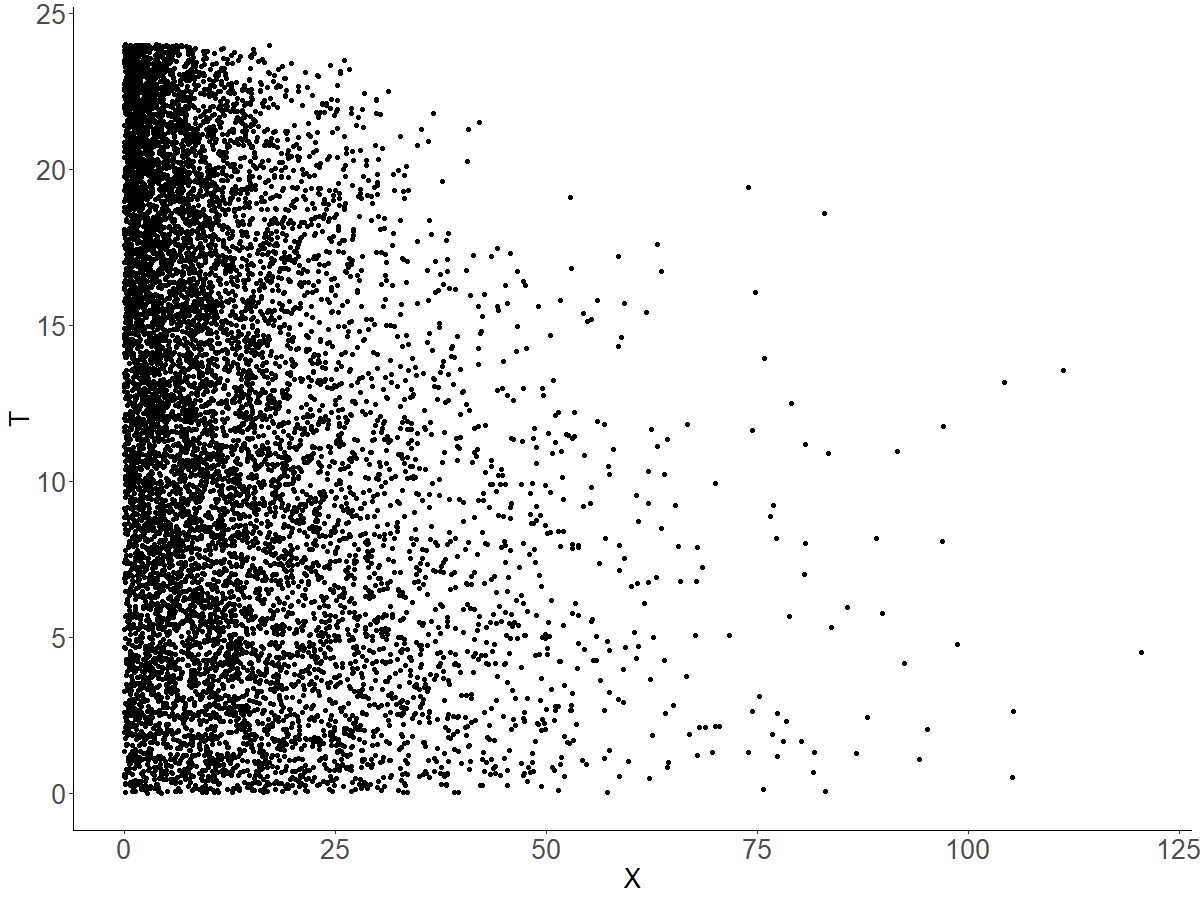}
	\end{minipage}
	\begin{minipage}{0.5\linewidth}
		\includegraphics[width=5.5cm]{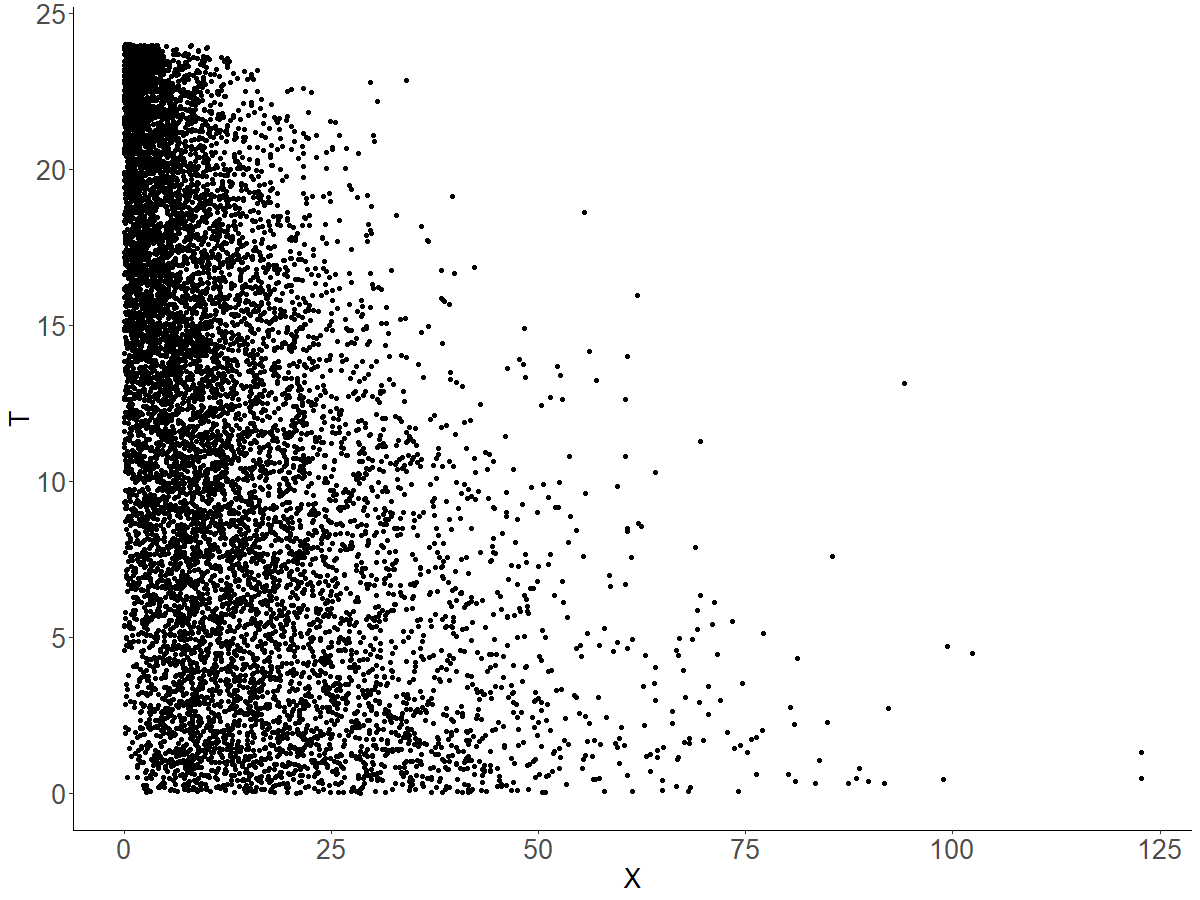}
	\end{minipage}	\caption{Scatter plot of $10,000$ bivariate random draws according to copula of Assumption \ref{A3:Ind} by Algorithm \ref{algo1} with parameter values $\vth=0$, $\vth=0.1$ (rows 1+3), $\vth=0.5$ and $\vth=1- \varepsilon_{\vartheta}$ (rows 2+4, with negligible $\varepsilon_{\vartheta}$): both margins uniform (top)/ one margin exponential (x-axis, $\theta=0.08$, bottom)} \label{punkte1}
\end{figure}

\section{Visualization of FGM copula} \label{simfgm}

\begin{figure}[H]
		\includegraphics[width=4.5cm]{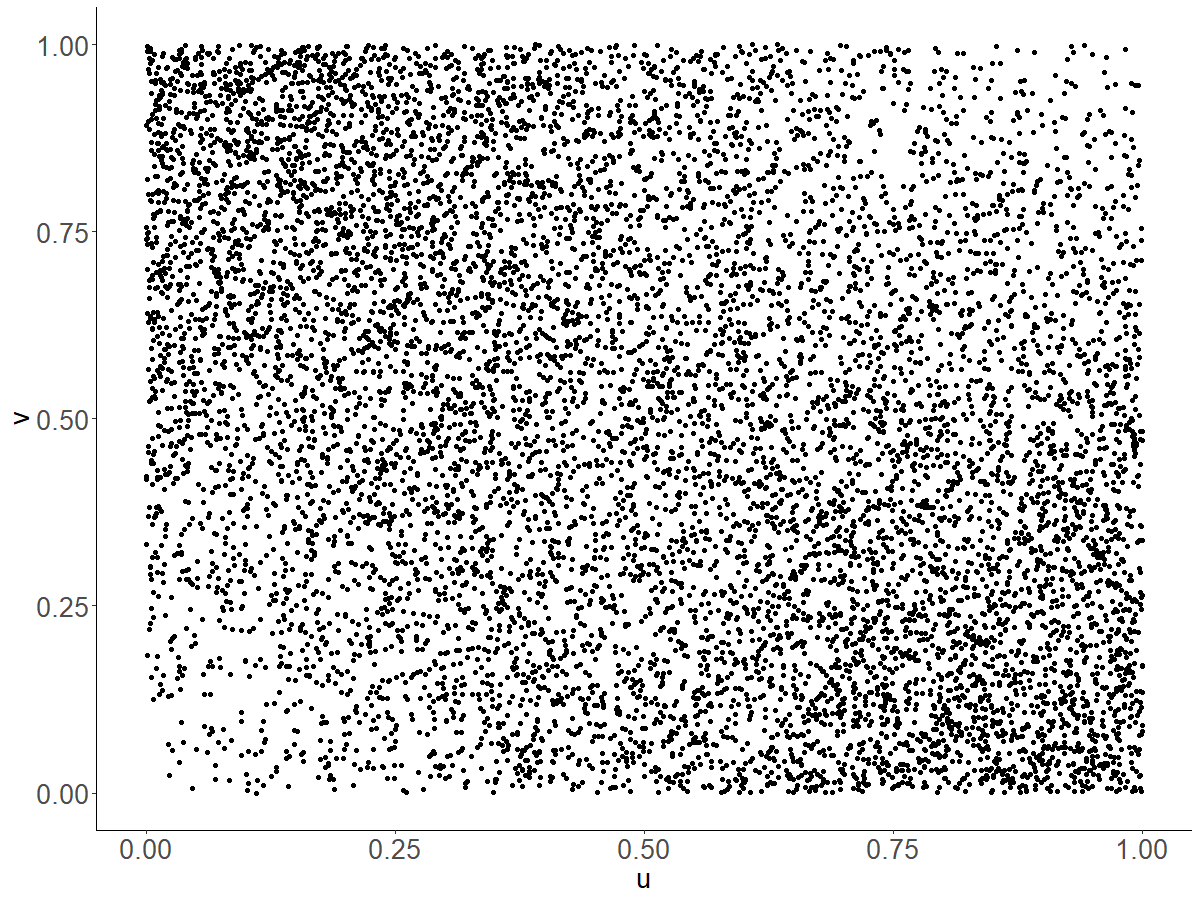}
		\includegraphics[width=4.5cm]{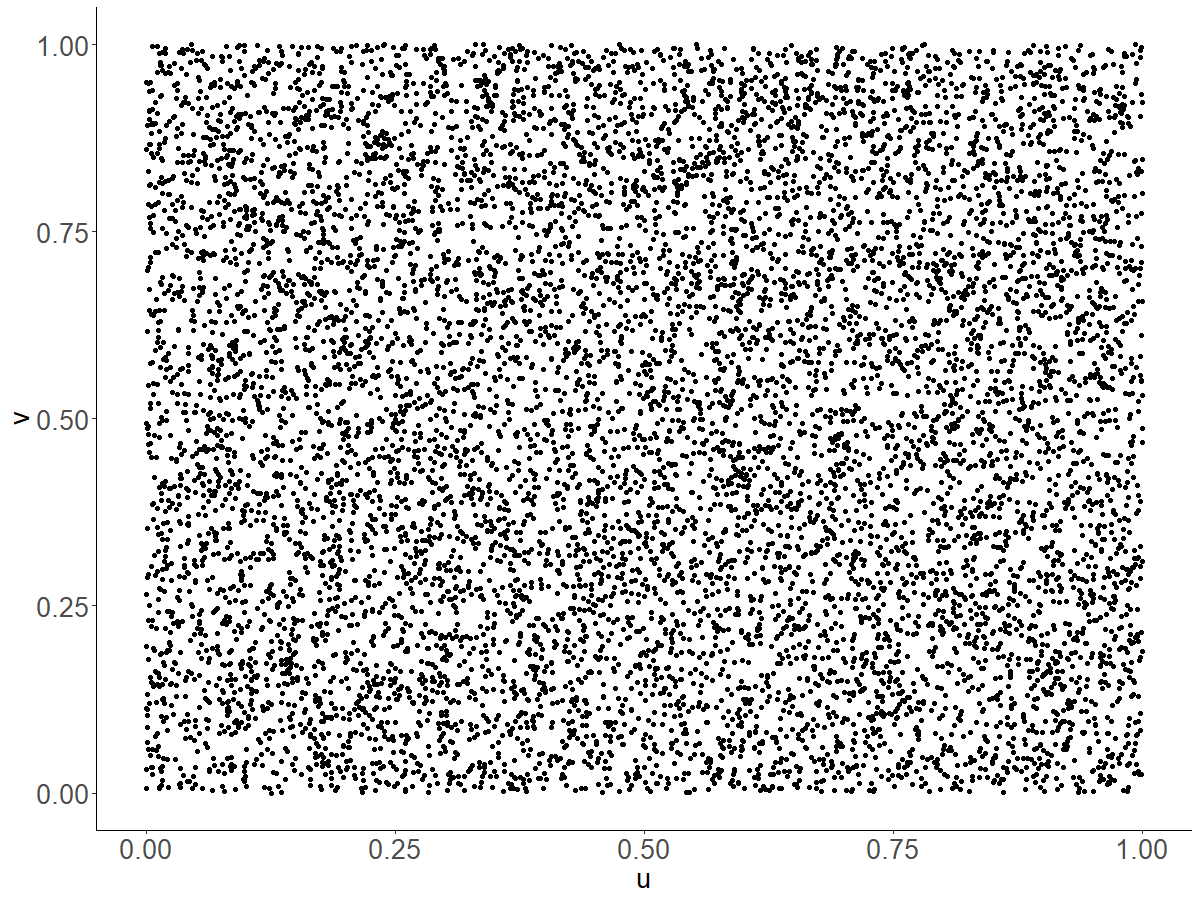}
		\includegraphics[width=4.5cm]{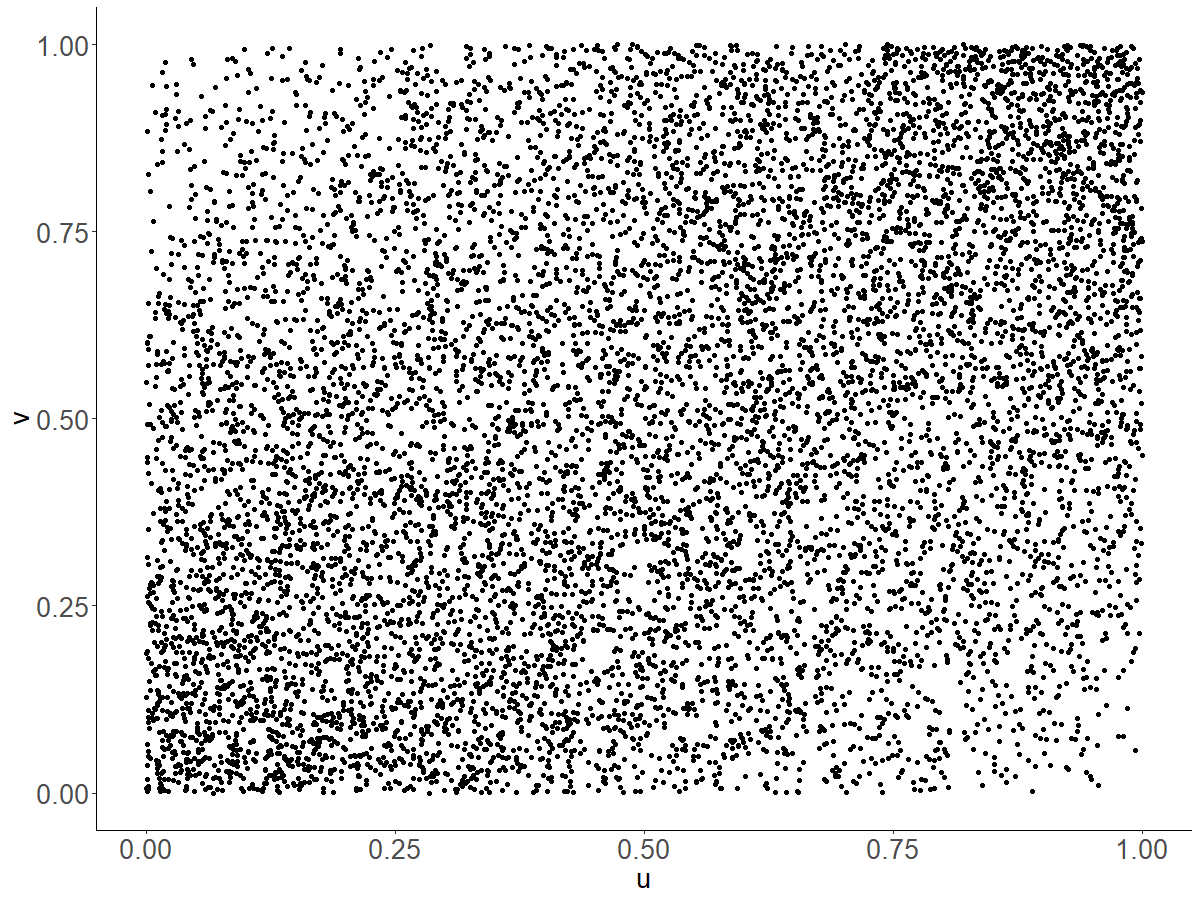}
	\caption{Scatter plot of $10{,}000$ realizations for copula $C^{\vartheta}_{FGM}$ with $\vth^{FGM}=\varepsilon_{\vartheta}-1$ (left), $\vth^{FGM}=0$ (middle) and $\vth^{FGM}=1-\varepsilon_{\vartheta}$ (right), with negligible $\varepsilon_{\vartheta}$}
	\label{abb3}
\end{figure}

\section{Derivations for Section \ref{seclike}}
\subsection{Proof of Lemma \ref{h1}} \label{appendixlem5}

Let $f,\,g: \, (0,\infty) \to \mathbb{R}$ be two functions with $f(x)=\log(x)-(1-x^{-1})$ and  $g(x)=(x-1)-\log(x)$ for $x \in (0,\infty)$. Obviously, it is $f(1)=g(1)=0$. Furthermore, the functions are strictly increasing on the interval $[1,\infty)$, because the derivatives $f'(x)=x^{-1}-x^{-2}$ and $g'(x)=1-x^{-1}$ are positive on the interval $(1,\infty)$ and zero for $x=1$. For all $x>1$ it follows
\begin{align*}
	f(1) < f(x) \Longleftrightarrow 0 < \log(x)-\left(1-\frac{1}{x}\right) \Longleftrightarrow 1-\frac{1}{x} < \log(x),\\
	g(1) < g(x) \Longleftrightarrow 0 < (x-1)-\log(x) \Longleftrightarrow \log(x) < x-1.
\end{align*} \qed

\subsection{Score of $\ell$} \label{appscore}

\subsubsection{Coordinates $\theta$ and $\vartheta$} \label{score_theta_vartheta}

\begin{eqnarray*}
	\frac{\partial}{\partial \theta} \log \ell
	&= & \frac{M}{\theta}+\sum_{j=1}^M \widetilde{X}_j (\vth\log(1-\widetilde{T}_j/G)-1)-n	\frac{\partial\alpha_{\fth}}{\partial \theta} \nonumber \\
	& & +\sum_{j=1}^M \frac{\vth \widetilde{X}_j(\vth \log(1-\widetilde{T}_j/G)-1)}{(\vth \theta \widetilde{X}_j+1)(\vth \log(1-\widetilde{T}_j/G)-1)+\vth} \label{scoretheta}\\
	\frac{\partial}{\partial \vth} \log \ell
	&= & \theta \sum_{j=1}^M \widetilde{X}_j \log(1-\widetilde{T}_j/G)-n 	\frac{\partial\alpha_{\fth}}{\partial \vth} \nonumber  \\
	& & +\sum_{j=1}^M \frac{(2\vth \theta \widetilde{X}_j+1)\log(1-\widetilde{T}_j/G)-\theta \widetilde{X}_j +1}{(\vth \theta \widetilde{X}_j+1)(\vth \log(1-\widetilde{T}_j/G)-1)+\vth}\label{scorevartheta}
\end{eqnarray*}

Partial derivatives of $\alpha_{\fth}$ are not given explicitly but are easy to calculate numerically due to the boundedness of $D$.

\subsubsection{Coordinate $n$} \label{score_n}

It is sufficient to consider the terms that depend on $n$ in log-likelihood \eqref{e5}. We search now the largest value for $n \in \mathbb{N}$, so that	$M  \log(n)-n \alpha_{\fth} \leq M  \log(n+1)-(n+1) \alpha_{\fth}$. By algebraic equivalance and application of  Lemma \ref{h1} we have $\alpha_{\fth}/M\leq \log((n+1)/n)<1/n$,
so that $n$ is the largest number, strictly smaller than $M/\alpha_{\fth}$.

\section{Identification} 

\subsection{Latent population model} \label{identsrs}

Consider an SRS of the joint distribution (see Figure \ref{exa}, top boxes). And keep in mind that with only $\theta$ being of interest, $\vartheta$ would not be a nuisance parameter and the $T_i$'s were abundant. Hence we assume interest in $\boldsymbol{\theta}$ here.   
From $f_{\fth_2}(x,t)/f_{\fth_1}(x,t)=1$, for all ${(x,t) \in D}$, we may conclude equality $\fth_1=\fth_2$. 
To this end, insert definition \eqref{e4}, to see that the density ratio is
\begin{equation} \label{dichtquot}
	\frac{\theta_2}{\theta_1} e^{-x(\theta_2-\theta_1)} \left(1-\frac{t}{G}\right)^{x(\theta_2\vartheta_2-\theta_1\vartheta_1)} \frac{(\vartheta_2\theta_2 x+1)\left(\vartheta_2 \log\left(1-\frac{t}{G}\right)-1\right)+\vartheta_2}{(\vartheta_1\theta_1 x+1)\left(\vartheta_1 \log\left(1-\frac{t}{G}\right)-1\right)+\vartheta_1}=1.
\end{equation}
Taking derivatives in the direction of $t$ and take $T_{\fth}:=(\vartheta\theta x+1)(\vartheta \log(1-t/G)-1)+\vartheta$, results into $F_1 F_2 =0$, with
\begin{eqnarray*}
	F_1& := & -(\theta_2/(\theta_1 G)) e^{-x(\theta_2-\theta_1)} (1-t/G)^{x(\theta_2\vartheta_2-\theta_1\vartheta_1)-1} T_{\fth}^{-2} \quad \text{and} \\
	F_2& := & x(\theta_2\vartheta_2-\theta_1\vartheta_1)T_{\fth_1}T_{\fth_2}+(\vartheta_2\theta_2 x+1)\vartheta_2 T_{\fth_1} -(\vartheta_1\theta_1 x+1)\vartheta_1 T_{\fth_2}.
\end{eqnarray*}
The first factor is only zero for $t=G$. Setting the second factor equal to zero results in $T_{\fth_2}\vartheta_1 [ T_{\fth_1}\theta_1 x+(\vartheta_1\theta_1 x+1)]
=T_{\fth_1}\vartheta_2  [ T_{\fth_2}\theta_2 x+(\vartheta_2\theta_2 x+1)]$.
The parameter $t$ is only included in term $\log(1-t/G)$. Define $y:=\log(1-t/G)$, to see that two bivariate polynomials of equal grade are equal. This is only the case, if (and only if) their coefficients coincide. A comparison of the coefficients of the absolute terms results into
\begin{align*}
	-\vartheta_1+\vartheta_2\vartheta_1=-\vartheta_2+\vartheta_2 \vartheta_1 \qquad \Longleftrightarrow \qquad \vartheta_1=\vartheta_2.
\end{align*}
Equating $\vartheta_1=\vartheta_2$ and comparing the coefficients with the variable $x$ results in
$\theta_1(\vartheta_2-2\vartheta_2^2+2\vartheta_2^3)=\theta_2(\vartheta_2-2\vartheta_2^2+2\vartheta_2^3)$
from which $\theta_1=\theta_2$ can be concluded for $\vartheta_2 \neq 0$. In the case of independence, i.e. for $\vartheta_1=\vartheta_2=0$, it would result in
\begin{align*}
	\frac{f_{\fth_2}(x,t)}{f_{\fth_1}(x,t)}=\frac{\theta_2}{\theta_1} e^{-x(\theta_2-\theta_1)}=1
\end{align*}
and did allow to conclude $\theta_1=\theta_2$ directly. \qed 

\subsection{Display of determinant in Assumption \ref{A5negdef}} \label{xy}

\nopagebreak 

\begin{figure}[H]
	\centering
	\begin{minipage}{0.48\linewidth}
		\includegraphics[width=7cm]{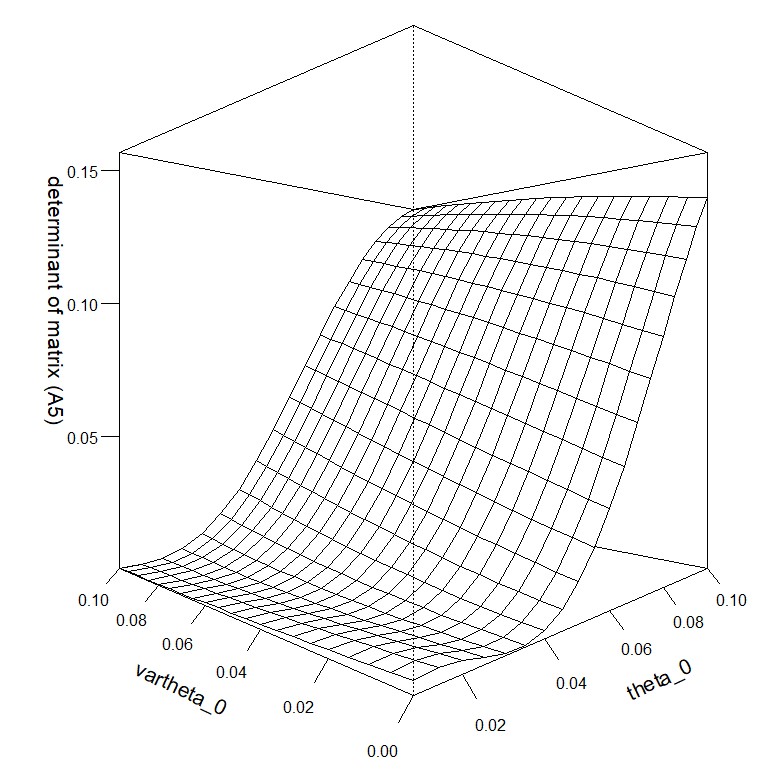}
	\end{minipage}
	\hspace{0.1cm}
	\begin{minipage}{0.48\linewidth}
		\includegraphics[width=7cm]{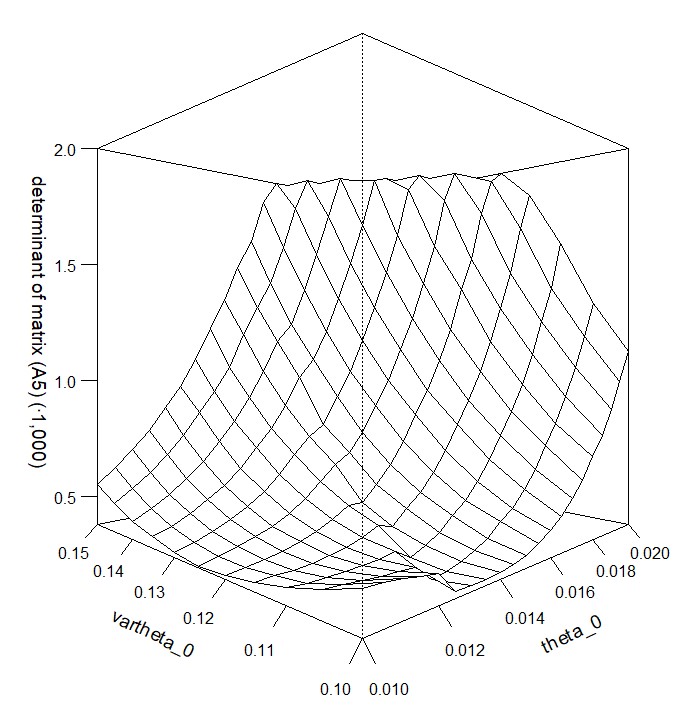}
	\end{minipage}
	\caption{Determinant of $\mathbb{E}_{\fth_0}[\dot{\psi}_{\fth_0,1}(X_1,T_1)]$ for relevant range of parameter space in the application in Section \ref{empexa} (left) and scaled up to area of smallest determinant (right)}
	\label{detA5graph}
\end{figure}

\subsection{Truncated population model} \label{idcond}

For two values $\boldsymbol{\theta}_1,\boldsymbol{\theta}_2 \in \Theta$, it is $KL(\tilde{f}_{\boldsymbol{\theta}_1},\tilde{f}_{\boldsymbol{\theta}_2})=0$ if and only if $\tilde{f}_{\boldsymbol{\theta}_1}=\tilde{f}_{\boldsymbol{\theta}_2}$, i.e. if
\[
\frac{f_{\boldsymbol{\theta}_2}}{f_{\boldsymbol{\theta}_1}} = \frac{\alpha_{\boldsymbol{\theta}_1}}{\alpha_{\boldsymbol{\theta}_2}}
\]
for all $(x,t) \in D$. By inserting \eqref{e4} one arrives again at equality \eqref{dichtquot}, only with a right hand side as $\alpha_{\boldsymbol{\theta}_1}/\alpha_{\boldsymbol{\theta}_2} $ instead of $1$. Taking derivatives with respect to $t$  has the same result as for equality \eqref{dichtquot} because $\alpha_{\boldsymbol{\theta}}$ does not depend on $t$. So that the remaining arguments for identification are as in Section \ref{identsrs}.

\subsection{Proof of Lemma \ref{lem1}} \label{prooflem1}

One may perform the integrations, introduced by the expectations, and prove equality to zero \cite[see e.g.][Appendix A]{weiswied2021}. Instead here we prove equality to zero of the integrand. We verify each coordinate separately, for \eqref{e7} we write $\mathbb{E}_{\fth_0}[\psi_{\fth_0,1}(X_1,T_1)]=s_1+s_2-s_3$ with
\begin{equation*}
	s_1:=\mathbb{E}_{\fth_0}(\mathds{1}_{[T_1,T_1+s]}(X_1){\theta_0}^{-1}), \; 	s_3:=\frac{\partial\alpha_{\fth_0}}{\partial \theta_0}
\end{equation*}
and
\begin{align*}
	s_2&:=\mathbb{E}_{\fth_0}\Bigg[\mathds{1}_{[T_1,T_1+s]}(X_1)\Big(X_1(\vth_0\log(1-T_1/G)-1)+\\
	&\qquad\qquad\frac{\vth_0 X_1(\vth_0 \log(1-T_1/G)-1)}{(\vth_0 \theta X_1+1)(\vth_0\log(1-T_1/G)-1)+\vth_0}\Big)\Bigg].
\end{align*}
Represent now $\mathbb{E}_{\fth_0}$ with \eqref{e4} for $s_1$ and $s_2$. For $s_3$ use the representation \eqref{alpha0} for $\alpha_{\fth_0}$ and interchange partial derivation and integration. Finally $s_1=\int_D -G^{-1} e^{-\theta_0 x} (1-t/G)^{\vth_0 \theta_0 x}  [(\vth_0 \theta_0 x +1)(\vth_0 \log(1-t/G)-1)+\vth_0] \dt(x,t)$, $s_2=\int_D -x(\vth_0\log(1-t/G)-1)(\theta_0/G) e^{-\theta_0 x} (1-t/G)^{\vth_0 \theta_0 x} [(\vth_0 \theta_0 x +1)(\vth_0 \log(1-t/G)-1)+2\vth_0] \dt(x,t)$ and
$s_3=\int_D -G^{-1} e^{-\theta_0 x} (1-t/G)^{\vth_0 \theta_0 	x}\{ \theta_0 \vth_0 x(\vth_0\log(1-t/G)-1)+
[1+\theta_0 x (\vth_0 \log(1-t/G)-1)][(\vth_0 \theta_0 x +1)(\vth_0 \log(1-t/G)-1)+\vth_0]\} \dt(x,t) =\int_D -G^{-1} e^{-\theta_0 x} (1-t/G)^{\vth_0 \theta_0 	x}\{ \theta_0 \vth_0 x(\vth_0\log(1-t/G)-1)+\vth_0 +(\vth_0 \theta_0 x +1)(\vth_0 \log(1-t/G)-1)+\vth_0 \theta_0 x(\vth_0 \log(1-t/G)-1)+ \theta_0 x (\vth_0\theta_0 x +1)(\vth_0 \log(1-t/G)-1)^2\} \dt(x,t)$.
Without $s_2$ it is $s_1-s_3=\int_D G^{-1} e^{-\theta_0 x} (1-t/G)^{\vth_0 \theta_0 	x} \{ \theta_0 \vth_0 x(\vth_0\log(1-t/G)-1)
+ \vth_0 \theta_0 x(\vth_0 \log(1-t/G)-1)+
\theta_0 x (\vth_0\theta_0 x +1)(\vth_0 \log(1-t/G)-1)^2\} \dt(x,t)$.
Adding $s_2$ finally leads to $\mathbb{E}_{\fth_0}[\psi_{\fth_0,1}(X_1,T_1)]=0$. For the second coordinate $\mathbb{E}_{\fth_0}[\psi_{\fth_0,2}(X_1,T_1)]$ of \eqref{eq:1}, arguments are similar. \qed

\section{Proof of Theorem \ref{satznormoffen}}\label{appnorm} 

As function in $\fth$, the rational functions in \eqref{eq:1} are two times continuously differentiable, because the denominator has no zeros for $\fth \in \Theta$ and $(X_i,T_i) \in D$. Furthermore, by Lemma \ref{propalpha}, the selection probability  $\alpha_{\fth}$, as function of $\fth$, is three times partially differentiable. The function $\fth \mapsto \psi_{\fth}(x,t)$ is hence, as composition of two times continuously differentiable functions, for almost all $(x,t) \in S$, as well two times continuously differentiable.
That $\mathbb{E}_{\fth_0}\big[\dot{\psi}_{\fth_0}(X_1,T_1)\big]$ is invertible follows from Assumption \ref{A5negdef}.

For the second condition \citep[Theorem 5.41]{vaart1998} note that by Lemma \ref{e18}, expanding the square for $g^2(x,t)$ results in 
$\psi_{\fth_0,1}^2(x,t)+\psi_{\fth_0,2}^2(x,t)\leq K_{\varepsilon}^2\{1+2[1-\log(1-t/G)]+[1-\log(1-t/G)]^2\}
=K_{\varepsilon}^2\{4-4\log(1-t/G)]+\log(1-t/G)^2\}$.
Note that $\mathbb{E}_{\fth_0} [1-\log(1-T_1/G)]=2$. By substituting $x=1-t/G$ and partially integrating with $u(x)=\log(x)^2$ and $v(x)=1$ one has $\mathbb{E}_{\fth_0} [(\log(1-T_1/G))^2]=\int_0^1 (\log(x))^2 \dt x=-\int_0^1 2\log(x) \dt x=2$ 
and hence $\mathbb{E}_{\fth_0}[\Vert \psi_{\fth_0}(X_1,T_1)\Vert^2]\leq 10 K_{\varepsilon}^2<\infty$.
The lengthy derivations of integrable bounds to the second partial derivatives of $\psi_{\fth}$ use similar arguments and are suppressed here. \qed

In order to proof Formula \eqref{ime}, recall the representation of  $\tilde{f}_{\fth}(x,t)$ from Section \ref{profmod} and consider the `anti-clockwise' model (Figure \ref{exa})
to derive $\tilde{\mathbb{E}}_{\fth_0}(\psi_{\fth_0}(\tilde{X}_1,\tilde{T}_1)\psi_{\fth_0}(\tilde{X}_1,\tilde{T}_1)')= \mathbb{E}_{\fth_0}(\psi_{\fth_0}(X_1,T_1)\psi_{\fth_0}(X_1,T_1)')/\alpha_{\fth_0}$ as well as
$-\tilde{\mathbb{E}}_{\fth_0}[\dot{\psi}_{\fth_0}(\tilde{X}_1,\tilde{T}_1)]=- \mathbb{E}_{\fth_0}[\dot{\psi}_{\fth_0}(X_1,T_1)]/\alpha_{\fth_0}$.

For the sake of brevity denote $\tilde{f}_{\fth_0}=\tilde{f}_{\fth_0}(\tilde{X}_1,\tilde{T}_1)$. Using matrix differentiation let $\partial \tilde{f}_{\fth}/\partial \fth:=(\partial \tilde{f}_{\fth}/\partial \theta,\partial \tilde{f}_{\fth}/\partial \vth)'$ be the gradient of the function $\tilde{f}_{\fth}$ and $\partial \tilde{f}_{\fth}/\partial \fth'$ its transpose. Accordingly, denote by $\partial^2 \tilde{f}_{\fth}/(\partial \fth \partial \fth')$ the Hessian. 
Now it holds
\begin{align*}
	\tilde{\mathbb{E}}_{\fth_0}[\dot{\psi}_{\fth_0}(\tilde{X}_1,\tilde{T}_1)]
	&=\tilde{\mathbb{E}}_{\fth_0}\left[\frac{\partial^2 \log \tilde{f}_{\fth_0}}{\partial \fth\partial \fth'}\right]
	=\tilde{\mathbb{E}}_{\fth_0}\left[\frac{\partial}{\partial \fth} \left(\frac{1}{\tilde{f}_{\fth_0}}\frac{\partial \tilde{f}_{\fth_0}}{\partial \fth'}\right)\right]\\
	&=\tilde{\mathbb{E}}_{\fth_0}\left[-\frac{1}{\tilde{f}_{\fth_0}^2}\frac{\partial \tilde{f}_{\fth_0}}{\partial \fth}\frac{\partial \tilde{f}_{\fth_0}}{\partial \fth'}+\frac{1}{\tilde{f}_{\fth_0}}\frac{\partial^2 \tilde{f}_{\fth_0}}{\partial \fth \partial \fth'}\right]\\
	&=-\tilde{\mathbb{E}}_{\fth_0}\left[\left(\frac{1}{\tilde{f}_{\fth_0}}\frac{\partial \tilde{f}_{\fth_0}}{\partial \fth}\right)\left(\frac{1}{\tilde{f}_{\fth_0}}\frac{\partial \tilde{f}_{\fth_0}}{\partial \fth'}\right)\right]
	+\tilde{\mathbb{E}}_{\fth_0}\left[\frac{1}{\tilde{f}_{\fth_0}}\frac{\partial^2 \tilde{f}_{\fth_0}}{\partial \fth \partial \fth'}\right]\\
	&=-\tilde{\mathbb{E}}_{\fth_0}\left[\left(\frac{\partial \log \tilde{f}_{\fth_0}}{\partial \fth}\right)\left(\frac{\partial \log \tilde{f}_{\fth_0}}{\partial \fth'}\right)\right]
	+ \int_S \frac{\partial^2 \tilde{f}_{\fth_0}(x,t)}{\partial \fth \partial \fth'} \dt (x,t).
\end{align*}
The IME for the `anti-clockwise' model is achieved when the second summand vanishes, i.e. when all entries of the matrix equal zero. We only depict the calculation for the first component here. As a further simplification, denote by  $\dot{f}_{\fth_0}$ the first, respectively by $\ddot{f}_{\fth_0}$ the second, derivative with respect to $\fth$'s first  coordinate, $\theta$. Similarly for the selection probability $\alpha_{\fth_0}$. On $D$ holds
$\partial \tilde{f}_{\fth_0}/\partial \theta=(\dot{f}_{\fth_0} \alpha_{\fth_0}-\dot{\alpha}_{\fth_0} f_{\fth_0})/\alpha_{\fth_0}^2$
as well as 
\begin{equation*}
	\frac{\partial^2 \tilde{f}_{\fth_0}}{\partial \theta^2}=\frac{\ddot{f}_{\fth_0} \alpha_{\fth_0}-\ddot{\alpha}_{\fth_0} f_{\fth_0}}{\alpha_{\fth_0}^2}-\frac{2\dot{\alpha}_{\fth_0} (\dot{f}_{\fth_0} \alpha_{\fth_0}-\dot{\alpha}_{\fth_0} f_{\fth_0})}{\alpha_{\fth_0}^3}.
\end{equation*}
Due to $\int_D f_{\fth_0}(x,t) \dt (x,t)=\alpha_{\fth_0}$ one has for the integral  
\begin{equation*}
	\int_D \frac{\partial^2 \tilde{f}_{\fth_0}(x,t)}{\partial \theta^2}\dt (x,t)=\int_D \frac{\ddot{f}_{\fth_0}(x,t)}{\alpha_{\fth_0}} \dt (x,t)-\frac{\ddot{\alpha}_{\fth_0}}{\alpha_{\fth_0}}-\frac{2\dot{\alpha}_{\fth_0}}{\alpha_{\fth_0}^2}\int_D \dot{f}_{\fth_0}(x,t) \dt (x,t)+\frac{2\dot{\alpha}_{\fth_0}^2}{\alpha_{\fth_0}^2}.
\end{equation*}
Lemma \ref{propalpha} ensures changeability of differentiation and integration, i.e. it is  $\int_D \dot{f}_{\fth_0}(x,t) \dt (x,t)=\dot{\alpha}_{\fth_0}$ and $\int_D \ddot{f}_{\fth_0}(x,t) \dt (x,t)=\ddot{\alpha}_{\fth_0}$. The value of the integral for the second derivative of the density $\tilde{f}_{\fth_0}$ is hence zero. This can be similarly shown for the other three components of the above matrix. 

As result one has the IME for the `anti-clockwise' model and, as described earlier thereof for the model of dependent truncation.    \qed

\newpage
\section{Simulation results for Section \ref{simressec}} \label{ressec}

\begin{longtable}{ccccccc}
	\caption{Simulation result for bias and variance of $\hat{\theta}$ and $\hat{\vartheta}$ as zeros of \eqref{eq:1} for different scenarios (subscript $n$ is omitted for the sake of brevity)} \label{sims} \\ 	\toprule
	& \multicolumn{3}{c}{\underline{$G=24,\, s=3$}} & \multicolumn{3}{c}{\underline{$G=24,\, s=48$}}\\ 
	\multicolumn{2}{c}{$\theta_0=0.05, \, \vth_0=0.001$} & \multicolumn{2}{c}{$\alpha_{\fth_0}=0.0648$} & &  \multicolumn{2}{c}{$\alpha_{\fth_0}=0.5294$} \\
	\toprule
	$n$ & $10^3$ & $10^4$ & $10^5$ &  $10^3$ & $10^4$ & $10^5$ \\
	\midrule[1.6pt]
	Bias$(\hat{\theta})$ 	&	0.00403 	& 	0.001456 	& 	0.000446 	& 		0.000136 	& 	-0.000011  						& 	0.0000044 \\ 
	Bias$(\hat{\vth})$ 		&	0.12373 	& 	0.047852 	& 	0.017451 & 		0.024781	&  	{0.005959}	&  	0.0013721 \\ 
	Var$(\hat{\theta})$		& 	0.00026 	&	0.000026 	& 	0.000003 	& 		0.000010	& 	{0.000001}	&  	0.0000001 \\
	Var$(\hat{\vth})$		&	0.02513 	& 	0.003987	& 	0.000610 	& 		0.001227	& 	{0.000102}	& 	0.0000108 \\
	\midrule[1.6pt]
	\multicolumn{2}{c}{$\theta_0=0.05, \, \vth_0=0.01$} & \multicolumn{2}{c}{$\alpha_{\fth_0}=0.0808$} & & \multicolumn{2}{c}{$\alpha_{\fth_0}=0.5286$}\\
	\toprule
	$n$ & $10^3$ & $10^4$ & $10^5$   
	& $10^3$ & $10^4$ & $10^5$   \\
	\midrule[1.6pt]
	Bias$(\hat{\theta})$ 	& 	0.00241 	& 	0.001411  	& 	0.001110 	& 	0.00002 	& 	-0.000106  						& -0.0000881 \\ 
	Bias$(\hat{\vth})$ 		&	0.13245 	& 	0.057141 	&  	0.038426 	&	0.02023		&  	{0.003510}	&  -0.0000125 \\ 
	Var$(\hat{\theta})$		& 	0.00025		& 	0.000027 	& 	0.000003 	& 	0.00001		& 	{0.000001}	&  {0.0000001}\\
	Var$(\hat{\vth})$		& 	0.02961		& 	0.004975	& 	0.001038	&	0.00150		&	{0.000198}	 & {0.0000312}\\		\midrule[1.6pt]	
	\multicolumn{2}{c}{$\theta_0=0.1, \, \vth_0=0.001$} & \multicolumn{2}{c}{$\alpha_{\fth_0}=0.0982$} & & \multicolumn{2}{c}{$\alpha_{\fth_0}=0.37575$} \\
	\toprule
	$n$ & $10^3$ & $10^4$ & $10^5$  & $10^3$ & $10^4$ & $10^5$ \\
	\midrule[1.6pt]
	Bias$(\hat{\theta})$ 	& 	0.00075 	& 	-0.00010 						& -0.000192 	& 	0.00035 	& 	0.000011  	& -0.0000253\\ 
	Bias$(\hat{\vth})$ 		& 	0.07880 	&  	{0.02142}	&  {0.005680}	& 	0.01883		&  	0.004558	&  {0.0007957}	\\ 
	Var$(\hat{\theta})$		& 	0.00028		& 	{0.00003}	&  {0.000003}	&  	0.00002		& 	0.000002	&  {0.0000002}	\\
	Var$(\hat{\vth})$		& 	0.01147		& 	{0.00099}	& {0.000087}	& 	0.00083		& 	0.000064	& 	{0.0000063}\\		\midrule[1.6pt]	
	\multicolumn{2}{c}{$\theta_0=0.1, \, \vth_0=0.01$} & \multicolumn{2}{c}{$\alpha_{\fth_0}=0.0978$} & &  \multicolumn{2}{c}{$\alpha_{\fth_0}=0.37574$}\\
	\toprule
	$n$ & $10^3$ & $10^4$ & $10^5$ 	& $10^3$ & $10^4$ & $10^5$ \\
	\midrule[1.6pt]
	Bias$(\hat{\theta})$ 	& 	0.00015 	& 	-0.00095  					& -0.000884 	& 	0.00011	 	& 	-0.00018  						& -0.0001896 \\ 
	Bias$(\hat{\vth})$ 		& 	0.07874		&  	{0.01768}	&  {0.006281}& 	0.01086		&  	-0.002902						&  -0.0061084\\ 
	Var$(\hat{\theta})$		&  	0.00025		& 	{0.00003}	&  {0.000003}&  	0.00002		& 	{0.000002}	&  {0.0000002}\\
	Var$(\hat{\vth})$		& 	0.01340		& 	{0.00117}	& {0.000195}&	0.00078		& 	{0.000080}	& {0.0000122}\\ 	\midrule[1.6pt] 	\midrule[1.6pt]
	&		\multicolumn{3}{c}{\underline{$G=48, \, s=3$}} & \multicolumn{3}{c}{\underline{$G=24, \, s=2$}} \\
	\multicolumn{2}{c}{$\theta_0=0.05, \, \vth_0=0.001$} &  \multicolumn{2}{c}{$\alpha_{\fth_0}=0.0528$} & &  	\multicolumn{2}{c}{$\alpha_{\fth_0}=0.0554$} \\
	\toprule
	$n$ & $10^3$ & $10^4$ & $10^5$ 	& $10^3$ & $10^4$ & $10^5$ \\
	\midrule[1.6pt]
	Bias$(\hat{\theta})$ &	0.00104 & -0.00016  					& -0.000031 	& 0.00447	& 0.00137	& 0.000822  \\ 
	Bias$(\hat{\vth})$ 	& 	0.12574	&  {0.03096}	& {0.008380}	& 0.15437	& 0.05441	& 0.023376\\ 
	Var$(\hat{\theta})$	&  	0.00014	&	{0.00001} & {0.000001}	& 0.00032	& 0.00004	& 0.000005\\
	Var$(\hat{\vth})$	& 	0.02701	&	{0.00200} & {0.000167}	& 0.03737	& 0.00514	& 0.000968\\ 	\midrule[1.6pt]
	\multicolumn{2}{c}{$\theta_0=0.05, \, \vth_0=0.01$} & \multicolumn{2}{c}{$\alpha_{\fth_0}=0.0525$} & &   \multicolumn{2}{c}{$\alpha_{\fth_0}=0.0552$}\\
	\toprule
	$n$ & $10^3$ & $10^4$ & $10^5$ & $10^3$ & $10^4$ & $10^5$  \\
	\midrule[1.6pt]
	Bias$(\hat{\theta})$ & 0.00055 	& -0.00035  				  & -0.000379 	& 0.00371	& 0.00184   & 0.001204 \\ 
	Bias$(\hat{\vth})$ 	 & 0.11406	& {0.02914} & {0.007752} 	& 0.15143	& 0.06513	& 0.037434\\ 
	Var$(\hat{\theta})$	 & 0.00012 	& {0.00001} & {0.000001} 	& 0.00033	& 0.00004	& 0.000005 \\
	Var$(\hat{\vth})$	 & 0.02528	& {0.00244} & {0.000298}	& 0.03774	& 0.00663	& 0.001365\\		\midrule[1.6pt]
	\multicolumn{2}{c}{$\theta_0=0.1, \, \vth_0=0.001$} &  \multicolumn{2}{c}{$\alpha_{\fth_0}=0.0535$} & &   \multicolumn{2}{c}{$\alpha_{\fth_0}=0.0686$}\\
	\toprule
	$n$ & $10^3$ & $10^4$ & $10^5$ & $10^3$ & $10^4$ & $10^5$  \\
	\midrule[1.6pt]
	Bias$(\hat{\theta})$ & 0.00229 	& 0.00018 	 & -0.000015 & 0.00231	& -0.00008  					& -0.000177 \\ 
	Bias$(\hat{\vth})$ 	 & 0.11244	& 0.02525	 & {0.006045}	& 0.10145	& {0.02562}	& {0.008062}\\ 
	Var$(\hat{\theta})$	 & 0.00030  & 0.00002	 & {0.000002} 	& 0.00042	& {0.00004}	& {0.000004}\\
	Var$(\hat{\vth})$	 & 0.02188	& 0.00141	 & {0.000104}	& 0.01817	& {0.00150}	& {0.000138}\\ \midrule[1.6pt]
	\multicolumn{2}{c}{$\theta_0=0.1, \, \vth_0=0.01$} &  \multicolumn{2}{c}{$\alpha_{\fth_0}=0.0534$} & &  \multicolumn{2}{c}{$\alpha_{\fth_0}=0.0684$}\\
	\toprule
	$n$ & $10^3$ & $10^4$ & $10^5$ & $10^3$ & $10^4$ & $10^5$  \\
	\midrule[1.6pt]
	Bias$(\hat{\theta})$ & 0.00275	 & -0.00023  					& -0.000580 & 0.00126	 & -0.00046  					& -0.000618\\ 
	Bias$(\hat{\vth})$ 	 & 0.11327	 & {0.01970}	& -0.000619& 0.10149	 & {0.02399} 	&  {0.007045}\\ 
	Var$(\hat{\theta})$	 & 0.00030	 & {0.00002}	& {0.000002} 	& 0.00039	 & {0.00004}	&  {0.000004}\\
	Var$(\hat{\vth})$	 & 0.02576	 & {0.00154}	& {0.000123}	& 0.01981	 & {0.00182}	& {0.000249}\\ \bottomrule[1.6pt]		
\end{longtable}

\end{document}